\providecommand{\U}[1]{\protect\rule{.1in}{.1in}}
\newtheorem{theorem}{Theorem}[section]
\newtheorem{theorem2}{Theorem}[section]
\newtheorem{corollary}{Corollary}[section]
\newtheorem{definition}{Definition}[section]
\newtheorem{idea memo}{Idea Memo}[section]
\newtheorem{lemma}{Lemma}[section]
\newtheorem{remark}{Remark}[section]
\begin{document}

\title{Local state and sector theory in local quantum physics}
\author{Izumi Ojima\thanks{ojima@kurims.kyoto-u.ac.jp}\\
Shimosakamoto, Otsu, Shiga 520-0105, Japan \vspace{2mm}\\
Kazuya Okamura\thanks{okamura@math.cm.is.nagoya-u.ac.jp}\\
Graduate School of Information Science, Nagoya University\\
Chikusa-ku, Nagoya 464-8601, Japan\vspace{2mm}\\
Hayato Saigo\thanks{h\_saigoh@nagahama-i-bio.ac.jp}\\
Nagahama Institute of Bio-Science and Technology\\
Nagahama 526-0829, Japan}

\maketitle

\begin{abstract}
We define a new concept of local states in the framework of algebraic quantum field theory (AQFT).
Local states are a natural generalization of states and give
a clear vision of localization in the context of QFT.
In terms of them, we can find a condition from which follows automatically the famous
DHR selection criterion in DHR-DR theory.
As a result, we can understand the condition as consequences of
physically natural state preparations in vacuum backgrounds.
Furthermore, a theory of orthogonal decomposition of completely positive (CP) maps is developed.
It unifies a theory of orthogonal decomposition of states and order structure theory of CP maps.
By using it, localized version of sectors is formulated,
which gives sector theory for local states with respect to general reference representations.
\end{abstract}

\begin{flushleft}
\textbf{Keywords.}$\quad$ local state, sector theory, CP-measure space
\end{flushleft}
\textbf{Mathematics Subject Classification (2010)} 81T05, 46L53, 81P16
\begin{flushleft}

\end{flushleft}


\section{Introduction}\label{se:1}

As a mathematical foundation of quantum field theory (QFT), algebraic quantum field theory (AQFT) proposed by
Haag and Kastler \cite{HaagKastler} and by Araki has
made great contributions to our understanding of QFT. In particular,
sector theory developed by Doplicher, Haag and Roberts (DHR, for short)
\cite{DHR1,DHR2}
and by Doplicher and Roberts (DR, for short) \cite{DR89no1,DR89no2,DR90}
succeeded in treating local excitations as deviations from a vacuum and gave a contribution to clarifying
the observational origin of statistics of quantum fields and that of internal symmetries.

In AQFT, we begin with a local net $\{\mathcal{A}(\mathcal{O})\}_{\mathcal{O}\in\mathcal{K}}$
consisting of $W^\ast$-algebras
$\mathcal{A}(\mathcal{O})$ of observables for each bounded region $\mathcal{O}\in
\mathcal{K}$ of four dimensional Minkowski space $M_4=\mathbb{R}^4$.
In this mathematical formulation, we
analyze quantum fields by means of families of observables which are measurable in bounded
space-time regions. On the other hand, for the purpose of the description of physical situations and
of experimental settings, we use states defined as normalized positive linear functionals on the global
algebra
\begin{equation}
\mathcal{A}=\overline{\bigcup_{\mathcal{O}\in\mathcal{K}} \mathcal{A}(\mathcal{O})}^{\Vert\cdot\Vert}
\end{equation}
of a given local net $\{\mathcal{A}(\mathcal{O})\}_{\mathcal{O}\in\mathcal{K}}$. Typical examples are
vacuum states $\omega_0$ and KMS states $\omega_\beta$ where $\beta>0$.
However, several investigations (see \cite{Kitajima09} and references therein)
revealed a difficulty establishing a criterion of localizations
in bounded space-time regions in terms of states on the global algebra.
For instance, it is shown in \cite{Kitajima09}
that there is no ``detection operator" in any ``localized states" defined on the global algebra.
Here, a detection operator is defined by a projection of a local algebra, and
a state on the global algebra is said to be localized in a
bounded region $\mathcal{O}$ if it is identical with a vacuum state
on the algebra of observables on the causal complement $\mathcal{O}^\prime$ of $\mathcal{O}$.
Thus we cannot expect that localizations in bounded space-time regions are described (or characterized)
by methods similar to the quantum mechanical method.
In contrast to the standard formulation explained above, we can specify in the real world
physical situations and experimental settings in bounded regions. Thus it is natural for us 
to consider that,
for each bounded region $\mathcal{O}$ where measurements and physical operations are performed,
(normal) states on $\mathcal{A}(\mathcal{O})$ are specified, and that states on $\mathcal{A}$
are nothing but an ideal concept realized in the limit of $\mathcal{O}$ tending to $\mathbb{R}^4$.
Along the line of this observation, we can realize that the use of the presheaf structure of
$\{\mathcal{A}(\mathcal{O})_\ast\}_{\mathcal{O}\in\mathcal{K}}$ enables us to specify localized states
and to define states at a single space-time point $x\in\mathbb{R}^4$ as germs \cite{FredeHaag87,HaagOjima}.
As is given in \cite{HaagOjima}, this viewpoint is strongly connected with operator product expansion (OPE).
Bostelmann \cite{Bostelmann,Bostelmann05no1,Bostelmann05no2}
then justified both quantum fields defined at a single space-time point
and OPE under a condition similar to the nuclearity condition
in the context of AQFT, and Buchholz, Ojima and Roos \cite{BOR01}
discussed observables at a single space-time point in order to define and characterize
thermal nonequilibrium local states.

The concept of ``local states" defined in section \ref{se:3}
is a new direction for developing the idea of \cite{FredeHaag87,HaagOjima}.
Local states first introduced by Werner \cite{Werner} are a kind of quantum operations which were
proposed by Haag and Kastler \cite{HaagKastler} and described by completely positive (CP) maps on $\mathcal{A}$.
They play the role of states on $\mathcal{A}(\mathcal{O})$ for some bounded region $\mathcal{O}$
and of identity maps on $\mathcal{A}((\widetilde{\mathcal{O}})^\prime)=
\overline{\cup_{\mathcal{O}_1\subset(\widetilde{\mathcal{O}})^\prime,
\mathcal{O}_1\in\mathcal{K}}\mathcal{A}(\mathcal{O}_1)}^{\Vert\cdot\Vert}$,
where $\widetilde{\mathcal{O}}$ satisfies $\overline{\mathcal{O}}\subsetneq\widetilde{\mathcal{O}}$.
Namely, local states should be regarded as physical processes, and have both functions of observables
and of states as a concept unifying them.
We then consider a local net satisfying `` the split property" and show the existence of local states.

One of the advantages of the use of local states is
to give a clear vision of localization in the context of QFT.
In mathematical physics, Newton and Wigner \cite{NW48}
raised a fundamental question on the difficulty of localization of massless modes.
It is proved by Wightman \cite{Wightman62} that any position operator cannot be defined for
a massless free particle with a non-zero finite spin, especially, for a photon
in the quantum mechanical description.
Two of the authors \cite{OS12} resolved this dilemma by exhibiting the ``effective mass" of a photon due
to the interaction with matter. By contrast, there is room for improving the treatment of localization
in the quantum field theoretical description.
A preparation of a local state is compatible with an actual physical situation (or experimental setting)
and, above all, is much the same as an assignment of a localized space-time region of local excitations.
The dilemma of the absence of ``detection operator"
in a ``localized state" is not an important issue
as far as we use local states since a local state itself separates its localized bounded region from
other causally separated ones.

In section \ref{se:4}, we examine DHR-DR theory in terms of local states. We give a sufficient condition for
minimal Stinespring representations of local states to imply the DHR selection criterion.
This condition mathematically represents physically natural state preparations in vacuum backgrounds.
The discussion in the section gives enough motivation for studying representation theory of CP maps.
We believe that we can improve and use the studies of local aspects of sector theory
developed by Buchholz, Doplicher and Longo \cite{Doplicher,DopLon83,BDL}.

We formulate a theory of orthogonal decomposition of CP maps in section \ref{se:5}. It unifies
a theory of orthogonal decomposition
of states due to Tomita and others (see \cite[Chapter 4]{BR1} and references therein) and
order structure theory of CP maps
by Arveson \cite{Arveson69}, both of which are fundamental and useful in operator algebras.
After the work of Arveson (see \cite{Arveson69,Arveson03,Arveson04,Davidson}
and so on for more details of his work),
Fujimoto \cite{Fujimoto} studied the decomposition theory of CP maps and
established their extremal decompositions.
Our discussion of orthogonal decomposition theory of CP maps follows these studies.
The essence of the theory is summerized in the following theorem
(with some explanations of relevant symbols to be given below):
\setcounter{section}{5}
\setcounter{theorem2}{1}
\begin{theorem2}[Tomita theorem for CP maps]$\quad$\\
For each $T\in \mathrm{CP}(\mathcal{X},\mathcal{M})$,
$\mathcal{O}_T$ and $\mathrm{AbvN}(\pi_T(\mathcal{X})^c)$ are categorically isomorphic.\vspace{1mm}\\
Let $[(S,\mathcal{F},\mu)]\in \mathrm{Ob}(\mathcal{O}_T)$
and $\mathfrak{B}\in\mathrm{Ob}(\mathrm{AbvN}(\pi_T(\mathcal{X})^c))$ be in the categorical isomorphism,
and $(S,\mathcal{F},\mu)$ be a representative of $[(S,\mathcal{F},\mu)]$.
There exists a $^\ast$-isomorphism $\kappa_\mu:L^\infty(\nu)\rightarrow\mathfrak{B}$ defined by
\begin{equation}
V_T^\ast \kappa_\mu(f)\pi_T(X)V_T=\int f(s)\;d\mu(s,X),\hspace{5mm}f\in L^\infty(\nu),X\in\mathcal{X},
\end{equation}
where $\nu$ is a (scalar-valued) positive finite measure which is equivalent to $\rho\circ\mu$
for some normal faithful state $\rho$ on $\mathcal{M}$.
\end{theorem2}
\setcounter{section}{1}
This theorem gives the categorical isomorphism between the category $\mathcal{O}_T$ of equivalence
classes of ``orthogonal CP-measure spaces" which orthogonally decomposes a CP map $T$ and
the category $\mathrm{AbvN}(\pi_T(\mathcal{X})^c)$ of abelian von Neumann subalgebras
of $\pi_T(\mathcal{X})^c$, and enables us to grasp the conceptual meaning of
orthogonal decompositions of CP maps from algebraic viewpoints.

In section \ref{se:6}, generalized local sector theory
is discussed on the basis of mathematics in section \ref{se:5},
which is the sector theory for local states with respect to general reference representations.
This theory contains DHR-DR theory as a special case, and is expected to contribute to
foundations of QFT in a new direction of development of generalized sector theory \cite{Oj03,Oj04,Oj05}.

\section{Preliminaries}\label{se:2}
In this section, we introduce basic notions used in the following sections
and fundamental assumptions in AQFT. The following two subsections can be skipped if you are familiar with
operator algebras and the theory of completely positive maps.

\subsection{Operator algebras and states}\label{subse:2.1}
The contents in this subsection is mainly based on \cite{BR1,T79}.
There are overlaps with \cite{O12,OOS13} but for readers' convenience we have included them.
In this paper, C$^\ast$-algebras $\mathcal{X}$ are assumed to be unital, i.e., $1\in\mathcal{X}$.
We denote by $\mathcal{X}_+$ the set of positive elements of $\mathcal{X}$, i.e.,
$\mathcal{X}_+=\{X\in\mathcal{X}\;|\;X\geq 0\}$.
We denote by $\mathcal{X}^\ast$ the set of linear functionals on $\mathcal{X}$,
and by $\mathcal{X}^\ast_+$ the set of positive linear functionals on $\mathcal{X}$.
We denote by $E_\mathcal{X}$ the set of states on $\mathcal{X}$, i.e.,
$E_\mathcal{X}=\{\omega\in\mathcal{X}^\ast_+\;|\; \omega(1)=1 \}$.
\begin{theorem}[GNS representation theorem \cite{BR1,T79}]
Let $\mathcal{X}$ be a C$^{\ast}$-algebra and $\omega$ be a positive linear functional on $\mathcal{X}$.
Then, there exist a Hilbert space $\mathcal{H}_{\omega}$ with the inner product
$\langle\cdot|\cdot\rangle$, a vector $\Omega_{\omega}\in\mathcal{H}_{\omega}$,
and a $^\ast$-homomorphism $\pi_{\omega}: \mathcal{X}\rightarrow\text{\boldmath $B$}(\mathcal{H}_{\omega})$,
called a $^\ast$-representation (a representation, for short) of $\mathcal{X}$,
such that $\mathcal{H}_\omega=\overline{\pi_\omega(\mathcal{X})\Omega_\omega}$,
where $\overline{\pi_\omega(\mathcal{X})\Omega_\omega}$ denotes
the closure of $\pi_\omega(\mathcal{X})\Omega_\omega$, and
\begin{equation}
\omega(X)=\langle\Omega_{\omega}|\pi_{\omega}(X)\Omega_{\omega}\rangle,\hspace{5mm}X\in\mathcal{X}.
\end{equation}
The triplet $(\pi_{\omega},\mathcal{H}_{\omega}, \Omega_{\omega})$ is called a GNS representation of
$\mathcal{X}$ with respect to $\omega$, and is unique up to unitary equivalence.
\end{theorem}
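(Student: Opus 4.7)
The plan is the standard Gelfand--Naimark--Segal construction, turning the positive linear functional $\omega$ into an inner product on a quotient of $\mathcal{X}$ itself. First I would define a sesquilinear form on $\mathcal{X}$ by $\langle X,Y\rangle_\omega := \omega(X^\ast Y)$. Positivity of $\omega$ together with the identity $X^\ast X \ge 0$ shows this form is positive semidefinite, and the Cauchy--Schwarz inequality yields $|\omega(X^\ast Y)|^2 \le \omega(X^\ast X)\,\omega(Y^\ast Y)$. From this I would deduce that the null set
\begin{equation*}
N_\omega := \{X\in\mathcal{X}\mid \omega(X^\ast X)=0\}
\end{equation*}
is a closed left ideal of $\mathcal{X}$, so the quotient $\mathcal{X}/N_\omega$ inherits a genuine inner product. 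Completing this pre-Hilbert space gives the candidate Hilbert space $\mathcal{H}_\omega$, with $\Omega_\omega$ defined as the image of $1$.

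Next I would build the representation by left multiplication: set $\pi_\omega(X)[Y] := [XY]$ on the dense subspace $\mathcal{X}/N_\omega$, where $[Y]$ denotes the equivalence class of $Y$. The key quantitative step is boundedness; for this I would use the inequality $Y^\ast X^\ast XY \le \|X\|^2\, Y^\ast Y$ in $\mathcal{X}$ (which itself follows from $\|X\|^2 1 - X^\ast X \ge 0$ admitting a square root in the C$^\ast$-algebra), so that applying $\omega$ gives
\begin{equation*}
\|\pi_\omega(X)[Y]\|^2 = \omega(Y^\ast X^\ast X Y) \le \|X\|^2\,\omega(Y^\ast Y) = \|X\|^2\,\|[Y]\|^2.
\end{equation*}
Hence $\pi_\omega(X)$ extends uniquely to a bounded operator on $\mathcal{H}_\omega$, and the fact that $X\mapsto\pi_\omega(X)$ respects products, sums, adjoints, and the unit is immediate from the definition. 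The cyclicity $\mathcal{H}_\omega = \overline{\pi_\omega(\mathcal{X})\Omega_\omega}$ is built in, and the reproducing identity $\omega(X) = \langle\Omega_\omega\mid\pi_\omega(X)\Omega_\omega\rangle$ is a direct unpacking.

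For uniqueness up to unitary equivalence, suppose $(\pi',\mathcal{H}',\Omega')$ is another such triple. I would define a linear map $U$ on the dense subspace $\pi_\omega(\mathcal{X})\Omega_\omega$ by $U\pi_\omega(X)\Omega_\omega := \pi'(X)\Omega'$ and verify that it is well-defined and isometric via
\begin{equation*}
\langle \pi'(X)\Omega'\mid \pi'(Y)\Omega'\rangle = \omega(X^\ast Y) = \langle \pi_\omega(X)\Omega_\omega\mid \pi_\omega(Y)\Omega_\omega\rangle,
\end{equation*}
then extend by continuity and surjectivity (using cyclicity on both sides) to a unitary intertwining $\pi_\omega$ and $\pi'$ while sending $\Omega_\omega$ to $\Omega'$.

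I expect the main technical obstacle to be the boundedness step for $\pi_\omega(X)$, since it relies on the order-theoretic inequality $\|X\|^2 1 - X^\ast X \ge 0$ having a square root inside the C$^\ast$-algebra, which implicitly invokes continuous functional calculus (for general C$^\ast$-algebras, not just von Neumann algebras). Well-definedness on the quotient (independence of the representative $Y$ modulo $N_\omega$) must be checked in tandem and uses exactly the left-ideal property of $N_\omega$; the remaining verifications are essentially bookkeeping.
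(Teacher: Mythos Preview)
Your proposal is the standard GNS construction and is correct. The paper does not supply its own proof of this theorem; it is stated as a preliminary result with citations to \cite{BR1,T79}, so there is no argument in the paper to compare against, and what you have written is essentially what one finds in those references.
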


A linear functional $\rho$ on a von Neumann algebra
$\mathcal{M}$ on $\mathcal{H}$ is said to be normal if there exists
a trace class operator $\sigma$ on $\mathcal{H}$ such that
$\rho(M)=\mathrm{Tr}[\sigma M]$ for each $M\in\mathcal{M}$.
We denote by $\mathcal{M}_\ast$ the set of normal linear functionals on $\mathcal{M}$,
by $\mathcal{M}_{\ast,+}$ the set of normal positive linear functionals on $\mathcal{M}$
and by $\mathcal{M}_{\ast,1}$ the set of normal states on $\mathcal{M}$.
Let $\pi$ be a representation of a C$^\ast$-algebra $\mathcal{X}$. A state $\omega$ on
$\mathcal{X}$ is said to be $\pi$-normal if there exists a normal
state $\rho$ on $\pi(\mathcal{X})^{\prime\prime}$ such that $\omega(X)=\rho(\pi(X))$ for
every $X\in\mathcal{X}$. Two representations $\pi_{1},\pi_{2}$ are
quasi-equivalent, denoted by $\pi_{1}\approx\pi_{2}$, if each $\pi_{1}$-normal
state is $\pi_{2}$-normal and vice versa. As a complement, two representations
$\pi_{1},\pi_{2}$ are disjoint, denoted by 
$\pi_{1}\hspace{0.3em}\raisebox{-.41ex}{$\circ$}\hspace{-0.46em}
\raisebox{0.7ex}{\rotatebox[origin=c]{-90}{--}}\hspace{0.5em}\pi_{2}$, if no
$\pi_{1}$-normal state is $\pi_{2}$-normal and vise versa.

Let $\mathcal{X}$, $\mathcal{Y}$ be C$^\ast$-algebras, and $\mathcal{H}$ be a Hilbert space.
$M_n(\mathcal{X})$ denotes the C$^\ast$-algebra of $n\times n$ matrices with entries $\mathcal{X}$.
Let $T$ be a linear map from $\mathcal{X}$ into $\mathcal{Y}$.
For $n\in\mathbb{N}$, $T^{(n)}:M_n(\mathcal{X})\rightarrow M_n(\mathcal{Y})$ is defined by
$T^{(n)}(X)=(T(X_{ij}))$ for $X=(X_{ij})\in M_n(\mathcal{X})$.
A linear map $T:\mathcal{X}\rightarrow\mathcal{Y}$ is $n$-positive if $T^{(n)}$ is positive.
$T:\mathcal{X}\rightarrow\mathcal{Y}$ is $n$-positive if and only if
for every $X_1,X_2,\cdots, X_n\in\mathcal{X}$, $Y_1,Y_2,\cdots, Y_n\in\mathcal{Y}$,
\begin{equation}
\sum_{i,j=1}^n Y_i^\ast T(X_i^\ast X_j)Y_j\geq 0.
\end{equation}
In the case that $\mathcal{Y}$ is a C$^\ast$-subalgebra of $\textrm{\boldmath $B$}(\mathcal{H})$,
$n$-positivity of $T$ is equivalent to the next condition;
for every $X_1,X_2,\cdots, X_n\in\mathcal{X}$, $x_1,x_2,\cdots, x_n\in\mathcal{H}$,
\begin{equation}
\sum_{i,j=1}^n \langle x_i|T(X_i^\ast X_j) x_j\rangle\geq 0.
\end{equation}
A linear map $T:\mathcal{X}\rightarrow\mathcal{Y}$ is completely positive (CP, for short)
if, for all $n\in\mathbb{N}$, $T^{(n)}$ is positive.
$\mathrm{CP}(\mathcal{X},\mathcal{Y})$ denotes the set of completely positive maps
from $\mathcal{X}$ to $\mathcal{Y}$.
A linear map $T:\mathcal{X}\rightarrow\mathcal{Y}$ is completely bounded (CB, for short)
if $\sup_{n\in\mathbb{N}}\Vert T^{(n)}\Vert<\infty$.
A linear map $T:\mathcal{X}\rightarrow\mathcal{Y}$ is decomposable
if there exist four completely positive maps $T_1,T_2,T_3,T_4:\mathcal{X}\rightarrow\mathcal{Y}$
such that $T=T_1-T_2+i(T_3-T_4)$.

\subsection{Hilbert modules and representation theorems}\label{subse:2.2}

Let $\mathcal{X}$ be a C$^\ast$-algebra and $\mathcal{M}$
be a von Neumann algebra on a Hilbert space $\mathcal{H}$.

\begin{theorem}[Stinespring representation theorem \cite{Stine55,Arveson69,Paulsen02}]\label{Stinespring}
For every $T\in\mathrm{CP}(\mathcal{X},\textrm{\boldmath $B$}(\mathcal{H}))$,
there exist a Hilbert space $\mathcal{K}$, a representation $\pi$ on $\mathcal{K}$
and $V\in\textrm{\boldmath $B$}(\mathcal{H},\mathcal{K})$ such that
\begin{equation}
T(X)=V^\ast \pi(X)V,\hspace{5mm}X\in\mathcal{X}.
\end{equation}
We call the triplet $(\pi,\mathcal{K},V)$
a  Stinespring representation of $T$.
Furthermore, a Stinespring representation $(\pi,\mathcal{K},V)$ of $T$ is said to be minimal if it satisfies
$\mathcal{K}=\overline{\mathrm{span}}(\pi(\mathcal{X})V\mathcal{H})$,
where $\overline{\mathrm{span}}(\pi(\mathcal{X})V\mathcal{H})$ denotes the closure of the space linearly spanned
by $\pi(\mathcal{X})V\mathcal{H}$.
A minimal Stinespring representation of $T$ is denoted by $(\pi_T,\mathcal{K}_T,V_T)$
and is unique up to unitary equivalence.
In the case that $T$ is unital, $V_T$ becomes an isometry.
\end{theorem}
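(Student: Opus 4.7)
The plan is to build the Stinespring dilation as the GNS-type construction associated with the sesquilinear form on $\mathcal{X}\otimes\mathcal{H}$ canonically attached to $T$. Concretely, on the algebraic tensor product $\mathcal{X}\otimes\mathcal{H}$ I would define
\begin{equation}
\left\langle\sum_i X_i\otimes h_i\;\Big|\;\sum_j Y_j\otimes k_j\right\rangle_T:=\sum_{i,j}\langle h_i|T(X_i^\ast Y_j)k_j\rangle.
\end{equation}
Positive semidefiniteness of this form is the $n$-positivity condition (4) quoted in the paper and hence follows from complete positivity of $T$. Quotienting by the null space $\mathcal{N}=\{\xi:\langle\xi|\xi\rangle_T=0\}$ and completing yields a Hilbert space $\mathcal{K}$.

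Next I would define $\pi(X)$ and $V$ on the natural generators. For $X\in\mathcal{X}$, set $\pi(X)[Y\otimes h]:=[XY\otimes h]$, where $[\,\cdot\,]$ denotes the class in $(\mathcal{X}\otimes\mathcal{H})/\mathcal{N}$. Boundedness is the only nontrivial step: for any $Y_1,\ldots,Y_n\in\mathcal{X}$, $h_1,\ldots,h_n\in\mathcal{H}$, the inequality $\|X\|^2\cdot 1-X^\ast X\geq 0$ together with the $n$-positivity of $T$ gives $(T(Y_i^\ast X^\ast XY_j))_{i,j}\leq\|X\|^2(T(Y_i^\ast Y_j))_{i,j}$ as operator matrices, which translates exactly to $\|\pi(X)\xi\|_T\leq\|X\|\,\|\xi\|_T$ on the dense domain. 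Multiplicativity and the $\ast$-property are then verified on elementary tensors and extend by continuity, so $\pi:\mathcal{X}\to\textrm{\boldmath $B$}(\mathcal{K})$ is a representation. Define $V:\mathcal{H}\to\mathcal{K}$ by $Vh:=[1\otimes h]$; boundedness is immediate from $\|Vh\|^2=\langle h|T(1)h\rangle\leq\|T(1)\|\,\|h\|^2$. The Stinespring identity is then a one-line computation: $\langle h|V^\ast\pi(X)Vk\rangle=\langle 1\otimes h|X\otimes k\rangle_T=\langle h|T(X)k\rangle$. Minimality holds by construction, since $\pi(\mathcal{X})V\mathcal{H}$ contains representatives $[X\otimes h]$ for all $X,h$.

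For uniqueness up to unitary equivalence, given two minimal Stinespring representations $(\pi_i,\mathcal{K}_i,V_i)$ ($i=1,2$), I would define $U$ on the dense subspace $\pi_1(\mathcal{X})V_1\mathcal{H}\subset\mathcal{K}_1$ by $U\bigl(\sum_k\pi_1(X_k)V_1h_k\bigr):=\sum_k\pi_2(X_k)V_2h_k$. The crucial observation is that the inner product $\langle\pi_i(X)V_ih|\pi_i(Y)V_ik\rangle_{\mathcal{K}_i}=\langle h|T(X^\ast Y)k\rangle$ depends only on $T$ and not on the index $i$, so $U$ is simultaneously well-defined and isometric; by minimality its range is dense, hence $U$ extends to a unitary intertwining the two representations and carrying $V_1$ to $V_2$. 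Finally, if $T$ is unital then $V_T^\ast V_T=V_T^\ast\pi_T(1)V_T=T(1)=1$, so $V_T$ is an isometry.

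The main obstacle is the boundedness of $\pi(X)$ on the quotient pre-Hilbert space; once that operator-valued Schwarz-type estimate is in hand, everything else reduces to routine algebraic bookkeeping on elementary tensors. Complete positivity (rather than mere positivity) enters precisely at the two places where we need an $n\times n$ matricial inequality: positivity of the defining form and the norm bound on $\pi(X)$.
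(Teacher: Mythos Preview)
Your argument is correct and is precisely the standard GNS-type construction of the Stinespring dilation found in the references the paper cites (Stinespring, Arveson, Paulsen). Note, however, that the paper itself does not supply a proof of this theorem: it is quoted as a classical result with citations \cite{Stine55,Arveson69,Paulsen02}, so there is no ``paper's own proof'' to compare against. Your write-up matches the approach in those sources, including the use of the $n$-positivity inequality (4) from the paper's preliminaries both for positive semidefiniteness of the form and for the operator-Schwarz estimate giving boundedness of $\pi(X)$.
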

The followng theorem is known as the commutant lifting theorem of Arveson.
\begin{theorem}[Arveson \text{\cite[Theorem 1.3.1]{Arveson69}, \cite[Theorem 12.7]{Paulsen02}}]\label{CommLift}
Let $\mathcal{H}$, $\mathcal{K}$ be Hilbert spaces, $\mathcal{B}$ be a unital C$^\ast$-subalgebra of
$\textrm{\boldmath $B$}(\mathcal{K})$, and $V\in\textrm{\boldmath $B$}(\mathcal{H},\mathcal{K})$ such that
$\mathcal{K}=\overline{\mathrm{span}}(\mathcal{B}V\mathcal{H})$.
For every $A\in(V^\ast \mathcal{B}V)^\prime$, there exists a unique $A_1\in\mathcal{B}^\prime$ such that
$VA=A_1V$. Furthermore, the map $\theta:A\in(V^\ast \mathcal{B}V)^\prime\ni A\mapsto
A_1\in\mathcal{B}^\prime\cap\{VV^\ast\}^\prime$ is a $^\ast$-homomorphism.
\end{theorem}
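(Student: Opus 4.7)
The plan is to construct $A_1$ explicitly on the dense subspace $\mathcal{B}V\mathcal{H}\subseteq\mathcal{K}$ and extend by continuity. Given $A\in(V^\ast\mathcal{B}V)^\prime$, I would attempt to define
\begin{equation*}
A_1\Bigl(\sum_{i=1}^n B_iVx_i\Bigr)=\sum_{i=1}^n B_iVAx_i,\qquad B_i\in\mathcal{B},\,x_i\in\mathcal{H}.
\end{equation*}
The natural candidate is forced by the requirements $A_1V=VA$ and $A_1\in\mathcal{B}^\prime$ together, since these jointly imply $A_1(BVx)=BA_1Vx=BVAx$. The whole proof then reduces to showing that this prescription is well-defined, bounded, and has the stated algebraic properties.

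The crux is the norm estimate. First I would rewrite
\begin{equation*}
\Bigl\|\sum_i B_iVAx_i\Bigr\|^2=\sum_{i,j}\langle Ax_i,\,V^\ast B_i^\ast B_jV\,Ax_j\rangle,
\end{equation*}
and then exploit the hypothesis $A\in(V^\ast\mathcal{B}V)^\prime$: since $V^\ast B_i^\ast B_jV\in V^\ast\mathcal{B}V$, it commutes with $A$ and with $A^\ast$, so the expression equals $\sum_{i,j}\langle A^\ast Ax_i,V^\ast B_i^\ast B_jVx_j\rangle$. Now I package this as a matrix computation on $\mathcal{H}^n$: set $\widetilde{A}=\mathrm{diag}(A,\dots,A)$ and let $M=[V^\ast B_i^\ast B_jV]_{ij}\in M_n(\text{\boldmath$B$}(\mathcal{H}))$. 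The matrix $M$ is positive (it factors as the Gram matrix of $(B_iV)_i$), $\widetilde{A}^\ast\widetilde{A}$ commutes with $M$ entrywise, and $\widetilde{A}^\ast\widetilde{A}\leq\|A\|^2 I$, so
\begin{equation*}
\Bigl\|\sum_i B_iVAx_i\Bigr\|^2=\langle x,\widetilde{A}^\ast\widetilde{A}\,Mx\rangle\leq\|A\|^2\langle x,Mx\rangle=\|A\|^2\Bigl\|\sum_i B_iVx_i\Bigr\|^2.
\end{equation*}
This simultaneously shows well-definedness (if the argument vanishes, so does the image) and boundedness with $\|A_1\|\leq\|A\|$, allowing extension by continuity to all of $\mathcal{K}=\overline{\mathrm{span}}(\mathcal{B}V\mathcal{H})$. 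I anticipate this estimate — specifically, recognizing that the commutation of $A$ with $V^\ast\mathcal{B}V$ globalizes to commutation of $\widetilde{A}$ with the Gram matrix $M$ — to be the main technical obstacle, since the naive bound $\|A_1\|\leq\|A\|$ is not obvious from the defining formula alone.

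The remaining items are formal. Uniqueness follows because any two candidates agree on the generating set $\mathcal{B}V\mathcal{H}$. The identity $A_1B'(BVx)=B'BVAx=B'A_1(BVx)$ shows $A_1\in\mathcal{B}^\prime$. To verify the map $\theta:A\mapsto A_1$ is a $^\ast$-homomorphism, I would check multiplicativity directly on the dense subspace, namely $(AA^\prime)_1(BVx)=BV(AA^\prime)x=A_1A^\prime_1(BVx)$, and verify the adjoint relation $(A^\ast)_1=(A_1)^\ast$ by computing both sides of $\langle A_1BVx,B^\prime Vy\rangle=\langle BVx,(A^\ast)_1B^\prime Vy\rangle$, where once again commutativity of $A$ (and $A^\ast$) with $V^\ast B^\ast B^\prime V$ provides the equality. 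Finally, $A_1\in\{VV^\ast\}^\prime$ follows from $A_1VV^\ast=VAV^\ast$ combined with $VV^\ast A_1=(A_1^\ast VV^\ast)^\ast=((A^\ast)_1VV^\ast)^\ast=(VA^\ast V^\ast)^\ast=VAV^\ast$, completing the argument.
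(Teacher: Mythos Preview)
The paper does not supply its own proof of this theorem; it is stated with attribution to Arveson and Paulsen and used as a black box. Your argument is correct and is essentially the standard proof found in those references: define $A_1$ on the dense subspace $\mathrm{span}(\mathcal{B}V\mathcal{H})$ by the forced formula, establish boundedness via the Gram-matrix estimate using that $(V^\ast\mathcal{B}V)^\prime$ is self-adjoint (so $A^\ast A$ commutes with each $V^\ast B_i^\ast B_j V$ and hence $\widetilde{A}^\ast\widetilde{A}$ commutes with the positive matrix $M$), and then verify the algebraic properties on generators. There is nothing to compare against in the paper itself.
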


$E$ is called a Hilbert $\mathcal{M}$-module if it is a right $\mathcal{M}$-module
which has an $\mathcal{M}$-valued inner product $\langle \cdot|\cdot \rangle$ satisfying
\begin{align}
\langle \xi| \eta_1 M_1+\eta_2 M_2\rangle &= \langle \xi| \eta_1\rangle M_1+\langle \xi| \eta_2\rangle M_2,\\
\langle \xi| \eta\rangle^\ast&= \langle \eta| \xi\rangle
\end{align}
for every $\xi,\eta,\eta_1,\eta_2\in E$, $M_1,M_2\in\mathcal{M}$, and is complete with respect to
$\Vert\cdot\Vert_E=\Vert\langle \cdot|\cdot \rangle\Vert^{1/2}$. 
A Hilbert space $\mathcal{H}$ is a typical Hilbert module.
We denote by $\mathcal{B}^a(E)$ the linear space of
adjointable bounded right $\mathcal{M}$-linear operators on $E$
(see \cite{Paschke73,Skeide00} for the definition of the adjointability of operators on a Hilbert module).
$\mathcal{B}^a(E)$ is a C$^\ast$-algebra acting on $E$.
A Hilbert $\mathcal{M}$-module $E$ is said to be self-dual if, for every $\mathcal{M}$-valued
right $\mathcal{M}$-linear functional $f$, there exists $\eta\in E$ such that
$f(\xi)=\langle \eta|\xi \rangle$ for all $\xi\in E$.
For every self-dual Hilbert $\mathcal{M}$-module $E$, $\mathcal{B}^a(E)$ is a W$^\ast$-algebra
\cite[Proposition 3.10]{Paschke73}. The following intrinsic representation theorem then holds:
\begin{theorem}[Paschke \text{\cite[p.465, ll.21--27]{Paschke73}}; Rieffel
\text{\cite[Proposition 6.10]{Rieffel74}; Skeide \cite{Skeide00}}]\label{PGNS}
For every $T\in\mathrm{CP}(\mathcal{X},\mathcal{M})$,
there exist a self-dual Hilbert $\mathcal{M}$-module $E_T$, a $^\ast$-homomorphism $\varpi_T$
from $\mathcal{X}$ into $\mathcal{B}^a(E_T)$,
and $\xi_T\in E_T$ such that $\mathrm{span}(\varpi_T(\mathcal{X})\xi_T\mathcal{M})$
is weak$^\ast$ dense in $E_T$ and that
\begin{equation}
T(X)=\langle \xi_T| \varpi_T(X)\xi_T\rangle,\hspace{5mm}X\in\mathcal{X}.
\end{equation}
We call the triplet $(\varpi_T,E_T,\xi_T)$
a GNS representation of $T$.
A GNS representation of $T$ is unique up to Hilbert module homomorphic equivalence.
\end{theorem}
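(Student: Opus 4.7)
The plan is to follow the Paschke--Rieffel construction, producing the Hilbert $\mathcal{M}$-module by a GNS-type quotient and then passing to the self-dual completion, and finally to define the representation and cyclic vector on the algebraic level and extend.

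\textbf{Step 1 (the pre-module).} I would begin with the algebraic tensor product $\mathcal{X}\otimes_{\mathrm{alg}}\mathcal{M}$, viewed as a right $\mathcal{M}$-module in the obvious way, and define an $\mathcal{M}$-valued sesquilinear form on it by
\begin{equation}
\bigl\langle \textstyle\sum_i X_i\otimes M_i \,\big|\, \sum_j Y_j\otimes N_j\bigr\rangle
 = \sum_{i,j} M_i^\ast\, T(X_i^\ast Y_j)\, N_j .
\end{equation}
Positivity of this form is exactly the complete positivity of $T$ rewritten via the $n$-positivity condition recorded in Section~\ref{subse:2.1}. Taking the quotient by the null space $\mathcal{N}=\{\xi:\langle\xi|\xi\rangle=0\}$ (which is an $\mathcal{M}$-submodule by Cauchy--Schwarz for $\mathcal{M}$-valued inner products) and completing in the norm $\Vert\xi\Vert_E=\Vert\langle\xi|\xi\rangle\Vert^{1/2}$ yields a Hilbert $\mathcal{M}$-module $E_T^{\circ}$ with a distinguished vector $\xi_T^{\circ}$, the class of $1\otimes 1$.

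\textbf{Step 2 (the representation and cyclic vector).} On the pre-module I define $\varpi_T^{\circ}(X)$ by $\varpi_T^{\circ}(X)(Y\otimes M)=(XY)\otimes M$. A direct computation shows $\varpi_T^{\circ}(X)$ is $\mathcal{M}$-linear, has adjoint $\varpi_T^{\circ}(X^\ast)$, and satisfies $\Vert\varpi_T^{\circ}(X)\xi\Vert_E\le\Vert X\Vert\,\Vert\xi\Vert_E$ (standard Schwarz-type estimate using positivity of $\|X\|^2 X^\ast X - X^\ast\cdot\cdot X$ inside the inner product). Hence $\varpi_T^{\circ}$ extends by continuity to a $^\ast$-homomorphism into $\mathcal{B}^a(E_T^{\circ})$, and by construction $T(X)=\langle \xi_T^{\circ}|\varpi_T^{\circ}(X)\xi_T^{\circ}\rangle$, with $\varpi_T^{\circ}(\mathcal{X})\xi_T^{\circ}\mathcal{M}$ spanning a norm-dense subset.

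\textbf{Step 3 (self-dual completion).} The technical heart is to enlarge $E_T^{\circ}$ to a self-dual Hilbert $\mathcal{M}$-module $E_T$ in such a way that both the module action of $\mathcal{M}$ from the right and the operators $\varpi_T^{\circ}(X)$ extend. I would invoke Paschke's self-dual completion (embed $E_T^{\circ}$ into its dual $(E_T^{\circ})'$, whose bounded module maps inherit an $\mathcal{M}$-valued inner product making it self-dual, and verify that $E_T^{\circ}$ is weak$^\ast$ dense), or equivalently realise $E_T^{\circ}$ in a normal faithful representation of $\mathcal{M}$ and take the appropriate ultraweak closure à la Rieffel. Adjointable operators on $E_T^{\circ}$ extend uniquely to $E_T$ because their actions on the dual are given by precomposition with the adjoint; this produces $\varpi_T:\mathcal{X}\to\mathcal{B}^a(E_T)$, and $\xi_T:=\xi_T^{\circ}$ inherits the reproducing property and the weak$^\ast$ density claim.

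\textbf{Step 4 (uniqueness).} Given another such triple $(\varpi',E',\xi')$, I would define a map on the dense subspace by $\sum_i \varpi_T(X_i)\xi_T M_i \mapsto \sum_i \varpi'(X_i)\xi' M_i$; the inner-product identity $\langle \xi_T|\varpi_T(X^\ast Y)\xi_T\rangle=T(X^\ast Y)=\langle\xi'|\varpi'(X^\ast Y)\xi'\rangle$ guarantees that this map is well-defined and isometric for the $\mathcal{M}$-valued inner product. It is right $\mathcal{M}$-linear and intertwines the two representations, so it extends first to the Hilbert-module completion and then, by weak$^\ast$ density in both targets, to an isomorphism $E_T\cong E'$ sending $\xi_T\mapsto\xi'$ and $\varpi_T\mapsto\varpi'$.

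The main obstacle is Step~3: the existence and functoriality of the self-dual completion, and in particular ensuring that adjointable module operators extend while the right $\mathcal{M}$-action remains well behaved (so that $\mathcal{B}^a(E_T)$ is a W$^\ast$-algebra and the weak$^\ast$ density statement makes sense). All other steps are routine once the algebraic form in Step~1 is shown positive.
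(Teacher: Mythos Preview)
The paper does not actually supply a proof of this theorem: it is stated with attributions to Paschke, Rieffel, and Skeide and then used as a black box (the surrounding text only explains how the minimal Stinespring representation is recovered from $(\varpi_T,E_T,\xi_T)$ via the construction $\mathcal{G}_{E_T}$). So there is no ``paper's own proof'' to compare against; what one can compare against is the construction in the cited references, and your outline is precisely that construction. The algebraic tensor product with the $T$-twisted inner product, the quotient-and-complete step, the passage to Paschke's self-dual completion, and the uniqueness via the obvious intertwiner are exactly the ingredients of the Paschke--Rieffel argument, so your approach is correct and on the standard track.

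Two small points worth tightening. In Step~2 your boundedness estimate is written a bit cryptically; the clean way is to note that $\Vert X\Vert^2\cdot 1-X^\ast X=C^\ast C$ for some $C\in\mathcal{X}$, whence $\Vert X\Vert^2\langle\eta|\eta\rangle-\langle\varpi_T^{\circ}(X)\eta|\varpi_T^{\circ}(X)\eta\rangle=\langle\varpi_T^{\circ}(C)\eta|\varpi_T^{\circ}(C)\eta\rangle\geq 0$ directly from the positivity of the form. In Step~4, the extension of the isometric intertwiner from the norm-dense submodule to the self-dual completion is not just a density argument in norm: you need that bounded adjointable module maps between Hilbert $\mathcal{M}$-modules extend uniquely to their self-dual completions (Paschke, \S3), which is exactly the functoriality you already flag in Step~3. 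Once you invoke that, surjectivity follows from the weak$^\ast$ density hypothesis on the second triple. With these clarifications your argument is complete.
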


Let $E$ be a Hilbert $\mathcal{M}$-module,
where $\mathcal{M}$ is a von Neumann algebra on a Hilbert space $\mathcal{H}$.
We then can define a sesquilinear form on $E\otimes \mathcal{H}$ by
\begin{equation}
\left\langle \sum_{i=1}^n\xi_i\otimes x_i\left| \sum_{j=1}^m\eta_j \otimes y_j\right.\right\rangle=
\sum_{i=1}^n\sum_{j=1}^m\langle x_i|\langle \xi_i| \eta_j\rangle y_j\rangle
\end{equation}
for each $\sum_{i=1}^n\xi_i\otimes x_i, \sum_{j=1}^m\eta_j \otimes y_j\in E\otimes \mathcal{H}$.
This sesquilinear form is positive definite. We define a null space
\begin{equation}
\mathcal{N}=\left\{\sum_{i=1}^n\xi_i\otimes x_i\in E\otimes \mathcal{H}\;\left|\;
\left\langle \sum_{i=1}^n\xi_i\otimes x_i\left| \sum_{i=1}^n\xi_i\otimes x_i\right.\right\rangle=0\right.\right\}
\end{equation}
of $E\otimes \mathcal{H}$. The sesquilinear form on $E\otimes \mathcal{H}$
can be extended into that on $E\otimes \mathcal{H}/\mathcal{N}$ which is an inner product on
$E\otimes \mathcal{H}/\mathcal{N}$.
The completion $\overline{E\otimes \mathcal{H}/\mathcal{N}}$ of $E\otimes \mathcal{H}/\mathcal{N}$
with respect to the norm induced by the inner product on $E\otimes \mathcal{H}/\mathcal{N}$
is a Hilbert space and is denoted by $\mathcal{G}_E$.
For every $\xi\in E$, we define a linear map $\lambda(\xi)$ from $\mathcal{H}$ into $\mathcal{G}_E$ by
$\lambda(\xi)x=\xi\otimes x +\mathcal{N}$, $x\in\mathcal{H}$. Then it is easily checked that
$\lambda(\xi)\in\textrm{\boldmath $B$}(\mathcal{H},\mathcal{G}_E)$ and
$\Vert\lambda(\xi)\Vert=\Vert\xi\Vert_E$ for every $\xi\in E$.
The map $\lambda: E\ni\xi \mapsto \lambda(\xi)\in \textrm{\boldmath $B$}(\mathcal{H},\mathcal{G}_E)$
is then a right $\mathcal{M}$-linear isometry.
Furthermore, there exists a faithful representation $\rho:\mathcal{B}^a(E)\rightarrow
\textrm{\boldmath $B$}(\mathcal{G}_E)$ satisfying $\rho(X)\lambda(\xi)=\lambda(X\xi)$
for all $X\in\mathcal{X}$ and $\xi\in E$.
Both $\lambda$ and $\rho$ are weakly continious if $E$ is self-dual.
Since the minimal Stinespring representation of $T$ is unique up to unitary equivalence,
the minimal Stinespring representation of $T$ is unitarily equivalent to the representation
$(\rho\circ\varpi_T,\mathcal{G}_{E_T},\lambda(\xi_T))$ of $T$ \cite[Corollary 7]{Skeide00}.

\subsection{Algebraic quantum field theory}\label{subse:2.3}
We refer readers to \cite{Araki,Haag96} for the details of AQFT.
In addition, see \cite{OOS13}\footnote{In this paper, we give a new axiomatic system of quantum theory
based on sector theory and measurement theory, which establishes the so-called Born rule as a theorem.}
for the details of algebraic quantum theory if they are not familiar with it.

Let $\{\mathcal{A}(\mathcal{O})\}_{\mathcal{O}\in\mathcal{K}}$ be a family of unital $W^\ast$-algebras over
a causal poset $\mathcal{K}$ of bounded subspaces of
the four dimensional Minkowski space $M_4$ satisfying the following conditions:\\
$(1)$ $\mathcal{O}_1\subset \mathcal{O}_2$ $\Rightarrow$
$\mathcal{A}(\mathcal{O}_1)\subset\mathcal{A}(\mathcal{O}_2)$;\\
$(2)$ If $\mathcal{O}_1$ and $\mathcal{O}_2$ are causally separated each other,
then $\mathcal{A}(\mathcal{O}_1)$ and $\mathcal{A}(\mathcal{O}_2)$ mutually commute;\\
$(3)$ The global algebra $\mathcal{A}$ of
$\{\mathcal{A}(\mathcal{O})\}_{\mathcal{O}\in\mathcal{K}}$ is generated as a $C^\ast$-algebra:
\begin{equation}
\mathcal{A}=\overline{\bigcup_{\mathcal{O}\in\mathcal{K}} \mathcal{A}(\mathcal{O})}^{\Vert\cdot\Vert}
\end{equation}
$(4)$ There is a strongly continuous automorphic action $\alpha$ on $\mathcal{A}$
of the Poincare group $\mathcal{P}_+^{\uparrow}$ such that, for any $g=(a,L)\in\mathcal{P}_+^{\uparrow}=
\mathbb{R}^4\rtimes \mathcal{L}_+^{\uparrow}$ and $\mathcal{O}\in\mathcal{K}$,
\begin{equation}
\alpha_{(a,L)}(\mathcal{A}(\mathcal{O}))=\mathcal{A}(L\mathcal{O}+a)=
\mathcal{A}(k_{(a,L)}\mathcal{O}),
\end{equation}
where, for every $g=(a,L)\in\mathcal{P}_+^{\uparrow}$,
$k_g:M_4\rightarrow M_4$ is defined by $k_{(a,L)}x=L x+a$, for all $x\in M_4$.

We call the above $\{\mathcal{A}(\mathcal{O})\}_{\mathcal{O}\in\mathcal{K}}$
a ($W^\ast$-)local net of observables.

In the setting of AQFT,
it is assumed that all physically realizable states on $\mathcal{A}$
and representations of $\mathcal{A}$ are locally normal, i.e.,
normal on $\mathcal{A}(\mathcal{O})$ for all $\mathcal{O}\in\mathcal{K}$.
A vacuum state $\omega_0$ is a $\widetilde{\mathcal{P}_+^{\uparrow}}$-invariant locally normal state
on $\mathcal{A}$ satisfying the following conditions, 
where $\widetilde{\mathcal{P}_+^{\uparrow}}$ is
 the universal covering group of $\mathcal{P}_+^{\uparrow}$:\\
$(1)$ The GNS representation $(\pi_0,\mathcal{H}_0,U,\Omega)$ of $\omega_0$ is an irreducible
representation on a Hilbert space $\mathcal{H}_0$
and a unitary representation $U$ of $\widetilde{\mathcal{P}_+^{\uparrow}}$ such that
\begin{equation}
\pi_0(\alpha_g(X))=U_g\pi_0(X)U_g^\ast,\hspace{5mm}g\in\widetilde{\mathcal{P}_+^{\uparrow}},X\in\mathcal{A};
\end{equation}
$(2)$ $\Omega$ is a cyclic and separating vector for $\pi_0(\mathcal{A}(\mathcal{O}))$,
for any $\mathcal{O}\in\mathcal{K}$;\\
$(3)$ The spectrum of the generator $P=(P_\mu)$ of the translation part of $U$
is contained in the closed future cone $\overline{V_+}$.

$(2)$ is a natural assumption, owing to Reeh-Schlieder theorem.
In addition, we assume that $\mathcal{H}_0$ is separable.

For every $\mathcal{O}\in\mathcal{K}$, we denote by $\overline{\mathcal{O}}$ the closure of $\mathcal{O}$
and define the causal complement $\mathcal{O}^\prime$ of $\mathcal{O}$
by $\mathcal{O}^\prime=\{x\in\mathbb{R}^4\;|\;(x-y)^2<0, y\in\mathcal{O}\}$.
For $\mathcal{O}_1,\mathcal{O}_2\in\mathcal{K}$, we denote by $\mathcal{O}_1\Subset \mathcal{O}_2$
whenever $\overline{\mathcal{O}_1}\subsetneq \mathcal{O}_2$.
Furthermore, we adopt the following notations:
\begin{align}
\mathcal{K}_\Subset=\{\Lambda
&=(\mathcal{O}_1^\Lambda,\mathcal{O}_2^\Lambda)\in\mathcal{K}\times\mathcal{K}\;|\;
\mathcal{O}_1^\Lambda\Subset \mathcal{O}_2^\Lambda\},\\
\mathcal{K}_\Subset^{DC}=\{\Lambda &=(\mathcal{O}_1^\Lambda,\mathcal{O}_2^\Lambda)\in \mathcal{K}_\Subset\;|\;
\mathcal{O}_1^\Lambda\; \text{and}\; \mathcal{O}_2^\Lambda \;\text{are}\;\text{double}\;\text{cones}\}.
\end{align}

$\{\mathcal{A}(\mathcal{O})\}_{\mathcal{O}\in\mathcal{K}}$ has property B
if, for every iclusion pair of regions $\Lambda\in\mathcal{K}_\Subset$
and projection operator $E\in\mathcal{A}(\mathcal{O}_1^\Lambda)$, there is an isometry operator
$W\in\mathcal{A}(\mathcal{O}_2^\Lambda)$ such that $W^\ast W=E$ and $WW^\ast=1$.

We define a dual net $\{\mathcal{A}^d(\mathcal{O})\}_{\mathcal{O}\in\mathcal{K}}$
of $\{\mathcal{A}(\mathcal{O})\}_{\mathcal{O}\in\mathcal{K}}$ with respect to the vacuum representation $\pi_0$
by $\mathcal{A}^d(\mathcal{O}):=\pi_0(\mathcal{A}(\mathcal{O}^\prime))^\prime$
for all double cones $\mathcal{O}\in\mathcal{K}$.
Similarly, we define a bidual net $\{\mathcal{A}^{dd}(\mathcal{O})\}_{\mathcal{O}\in\mathcal{K}}$
of $\{\mathcal{A}(\mathcal{O})\}_{\mathcal{O}\in\mathcal{K}}$ with respect to $\pi_0$ by
$\mathcal{A}^{dd}(\mathcal{O}):=\mathcal{A}^d(\mathcal{O}^\prime)^\prime$,
for all double cones $\mathcal{O}\in\mathcal{K}$, where
$\mathcal{A}^d(\mathcal{O}^\prime)=
\overline{\cup_{\mathcal{O}_1\subset\mathcal{O}^\prime}\mathcal{A}^d(\mathcal{O}_1)}^{\Vert\cdot\Vert}$.
A ($W^\ast$-)local net
$\{\mathcal{A}(\mathcal{O})\}_{\mathcal{O}\in\mathcal{K}}$ satisfies Haag duality in $\pi_0$
if $\mathcal{A}^d(\mathcal{O})=\pi_0(\mathcal{A}(\mathcal{O}))^{\prime\prime}$
for all double cones $\mathcal{O}\in\mathcal{K}$.
$\{\mathcal{A}(\mathcal{O})\}_{\mathcal{O}\in\mathcal{K}}$ satisfies essential duality in $\pi_0$
if $\mathcal{A}^d(\mathcal{O})=\mathcal{A}^{dd}(\mathcal{O})$
for all double cones $\mathcal{O}\in\mathcal{K}$.

Assumptions in AQFT are summerized as follows:
\begin{itemize}
 \item A ($W^\ast$-)local net $\{\mathcal{A}(\mathcal{O})\}_{\mathcal{O}\in\mathcal{K}}$ has
the property B.
 \item $\mathcal{H}_0$ is separable.
 \item The vacuum representation $\pi_0$ is irreducible.
\end{itemize}

\section{Split Property and Local States}\label{se:3}

In algebraic quantum field theory, although the use of the global algebra $\mathcal{A}$
generated by a (W$^\ast$-)local net $\{\mathcal{A}(\mathcal{O})\}_{\mathcal{O}\in\mathcal{K}}$
is an idealization for the purpose of the description of the system,
it is inevitable to define and use the concept of states on it,
which has an attractive potential to evaluate all local observables and to show us its sectors.
We can actually prepare only (normal) states on $\mathcal{A}(\mathcal{O})$ for some $\mathcal{O}\in\mathcal{K}$.
In other words, we do not imagine that we can
specify them on local algebras with infinitely spacelike sparated bounded regions exactly.
In addition to this empirical intuition, 
(normal) states on $\mathcal{A}(\mathcal{O})$ cannot evaluate all elements of $\mathcal{A}$.
Thus, in this section, we find another concept which plays the role of that of (normal) states
on $\mathcal{A}(\mathcal{O})$ for some $\mathcal{O}\in\mathcal{K}$.

First, we pay attention to the following property:
\begin{definition}[]
Let $\{\mathcal{A}(\mathcal{O})\}_{\mathcal{O}\in\mathcal{K}}$ be a local net of observables.
$\{\mathcal{A}(\mathcal{O})\}_{\mathcal{O}\in\mathcal{K}}$ satisfies split property
if, for all $\Lambda\in\mathcal{K}_\Subset$, there exists a type $\mathrm{I}$ factor $\mathcal{N}$ such that
$\mathcal{A}(\mathcal{O}_1^\Lambda)\subset\mathcal{N}\subset\mathcal{A}(\mathcal{O}_2^\Lambda)$.
\end{definition}
The importance of this property was first pointed out by Buchholz \cite{Buchholz74}.
Under several conditions containing the property B and the irreducibility of the vacuum representation $\pi_0$,
it is known that there exist conditions equivalent to the split property in $\pi_0$:
\begin{theorem}[Werner \cite{Werner}; D'Antoni and Longo \cite{DALongo83}]
 $\quad$ \\
The following three conditions are equivalent:\\
$(1)$ $\{\pi_0(\mathcal{A}(\mathcal{O}))^{\prime\prime}\}_{\mathcal{O}\in\mathcal{K}}$ has the split property;\\
$(2)$ For every $\varphi\in\pi_0(\mathcal{A}(\mathcal{O}_1))^{\prime\prime}_{\ast,1}$
and $\mathcal{O}_2\in\mathcal{K}$ such that $\mathcal{O}_1\Subset\mathcal{O}_2$,
there is a unital completely positive map $T$
on $\pi_0(\mathcal{A})^{\prime\prime}=\textrm{\boldmath $B$}(\mathcal{H}_0)$
of the form $T(X)=\sum_j C_j^\ast X C_j$ with $C_j\in\pi_0(\mathcal{A}(\mathcal{O}_2))^{\prime\prime}$
such that $T(X)=\varphi(X)1$, $X\in\pi_0(\mathcal{A}(\mathcal{O}_1))^{\prime\prime}$,
where the summation is $\sigma$-weakly convergent for $\{C_j\}_j$;\\
$(3)$ If $\mathcal{O}_3$ and $\mathcal{O}_4$ are causally separated, then 
\begin{equation}
\pi_0(\mathcal{A}(\mathcal{O}_3))^{\prime\prime}\vee\pi_0(\mathcal{A}(\mathcal{O}_4))^{\prime\prime}
\cong\pi_0(\mathcal{A}(\mathcal{O}_3))^{\prime\prime}\otimes\pi_0(\mathcal{A}(\mathcal{O}_4))^{\prime\prime}.
\end{equation}
\end{theorem}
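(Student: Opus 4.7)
My overall plan is to close the cycle $(1)\Rightarrow(2)\Rightarrow(3)\Rightarrow(1)$, since each implication uses a geometric ingredient that the previous condition directly supplies. Throughout I identify $\mathcal{A}(\mathcal{O})$ with $\pi_0(\mathcal{A}(\mathcal{O}))^{\prime\prime}$ and work on the fixed separable vacuum Hilbert space $\mathcal{H}_0$.

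For $(1)\Rightarrow(2)$ the idea is an explicit matrix-unit construction inside the interpolating factor. Given a type I factor $\mathcal{N}$ with $\pi_0(\mathcal{A}(\mathcal{O}_1))^{\prime\prime}\subset\mathcal{N}\subset\pi_0(\mathcal{A}(\mathcal{O}_2))^{\prime\prime}$, write $\mathcal{H}_0\cong\mathcal{H}_1\otimes\mathcal{H}_2$ so that $\mathcal{N}=\textrm{\boldmath $B$}(\mathcal{H}_1)\otimes\mathbb{C}1$. Extend the normal state $\varphi$ on $\pi_0(\mathcal{A}(\mathcal{O}_1))^{\prime\prime}$ to a normal state $\hat\varphi$ on $\mathcal{N}$ (using the fact that $\textrm{\boldmath $B$}(\mathcal{H}_1)$ is injective), represented by a density operator $\rho=\sum_k\lambda_k |f_k\rangle\langle f_k|$ on $\mathcal{H}_1$. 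Fixing an ONB $\{e_i\}$ of $\mathcal{H}_1$, set
\begin{equation*}
C_{ik}:=\sqrt{\lambda_k}\,(|f_k\rangle\langle e_i|)\otimes 1\ \in\ \mathcal{N}\subset\pi_0(\mathcal{A}(\mathcal{O}_2))^{\prime\prime}.
\end{equation*}
A direct calculation gives $\sum_{i,k} C_{ik}^{\ast} C_{ik}=1_{\mathcal{H}_0}$ (so $T(X):=\sum_{i,k} C_{ik}^{\ast} X C_{ik}$ is UCP) and $\sum_{i,k} C_{ik}^{\ast} (X_1\otimes 1) C_{ik}=\hat\varphi(X_1)\cdot 1_{\mathcal{H}_0}$, so that $T(X)=\varphi(X)1$ for $X\in\pi_0(\mathcal{A}(\mathcal{O}_1))^{\prime\prime}$; the sums converge $\sigma$-weakly by separability of $\mathcal{H}_0$.

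For $(2)\Rightarrow(3)$ the idea is a slice-map argument. Given causally separated $\mathcal{O}_3,\mathcal{O}_4$, choose $\mathcal{O}_3\Subset\widetilde{\mathcal{O}}_3$ still causally separated from $\mathcal{O}_4$ and apply $(2)$ to $(\mathcal{O}_3,\widetilde{\mathcal{O}}_3)$. The Kraus operators $C_j\in\pi_0(\mathcal{A}(\widetilde{\mathcal{O}}_3))^{\prime\prime}$ commute with every $Y\in\pi_0(\mathcal{A}(\mathcal{O}_4))^{\prime\prime}$ by locality, so on the join $\mathcal{M}_3\vee\mathcal{M}_4$ (with $\mathcal{M}_i:=\pi_0(\mathcal{A}(\mathcal{O}_i))^{\prime\prime}$) the map $T_\varphi$ implements the left slice $\varphi\otimes\mathrm{id}$ on elementary products $XY$. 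Letting $\varphi$ range over the normal state space of $\mathcal{M}_3$ and applying the symmetric argument on the right yields enough slice maps to force the canonical surjection $\mathcal{M}_3\otimes\mathcal{M}_4\to\mathcal{M}_3\vee\mathcal{M}_4$ to be injective (Tomiyama's slice-map characterisation of the spatial tensor product). For $(3)\Rightarrow(1)$, given $\mathcal{O}_1\Subset\mathcal{O}_2$ choose $\widetilde{\mathcal{O}}$ with $\mathcal{O}_1\Subset\widetilde{\mathcal{O}}\Subset\mathcal{O}_2$ and apply $(3)$ to the mutually commuting pair $(\pi_0(\mathcal{A}(\mathcal{O}_1))^{\prime\prime},\pi_0(\mathcal{A}(\widetilde{\mathcal{O}}^{\prime}))^{\prime\prime})$, which share the joint cyclic-separating vector $\Omega$ by Reeh-Schlieder. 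Takesaki's interpolation theorem for spatial tensor products then yields an intermediate type I factor $\mathcal{N}$, and essential duality together with isotony ($\widetilde{\mathcal{O}}\subset\mathcal{O}_2$) places $\mathcal{N}$ inside $\pi_0(\mathcal{A}(\mathcal{O}_2))^{\prime\prime}$.

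The principal obstacle is $(3)\Rightarrow(1)$: one must apply Takesaki's interpolation theorem in the correct spatial form and arrange the geometry so that the abstract interpolating factor actually lands inside the target local algebra. Property B is needed to promote abstract projections of $\mathcal{N}$ to concrete isometries inside $\pi_0(\mathcal{A}(\mathcal{O}_2))^{\prime\prime}$, while essential (or Haag) duality identifies the relevant commutants with local algebras. A secondary technical point is $\sigma$-weak convergence of the Kraus sums in $(1)\Rightarrow(2)$, which is handled by separability of $\mathcal{H}_0$.
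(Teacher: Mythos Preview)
The paper does not prove this theorem; it is quoted as a known result due to Werner and D'Antoni--Longo, so there is no in-paper proof to compare your attempt against. Your cycle $(1)\Rightarrow(2)\Rightarrow(3)\Rightarrow(1)$ follows the strategy of those original references, and the first two implications are handled correctly: the explicit Kraus construction from matrix units in the interpolating type~I factor is the standard argument for $(1)\Rightarrow(2)$, and the slice-map argument for $(2)\Rightarrow(3)$ is exactly Werner's.

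There is, however, a genuine gap in your $(3)\Rightarrow(1)$. Condition $(3)$ as stated concerns pairs of causally separated regions $\mathcal{O}_3,\mathcal{O}_4\in\mathcal{K}$, i.e.\ \emph{bounded} regions. You apply it to the pair $(\mathcal{O}_1,\widetilde{\mathcal{O}}^{\prime})$, but $\widetilde{\mathcal{O}}^{\prime}$ is unbounded and hence not in $\mathcal{K}$; the tensor-product isomorphism for $\pi_0(\mathcal{A}(\mathcal{O}_1))^{\prime\prime}\vee\pi_0(\mathcal{A}(\widetilde{\mathcal{O}}^{\prime}))^{\prime\prime}$ does not follow directly from $(3)$, and passing to a limit over bounded $\mathcal{O}_4\subset\widetilde{\mathcal{O}}^{\prime}$ is not automatic. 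Moreover, you invoke essential duality to place the interpolating factor inside $\pi_0(\mathcal{A}(\mathcal{O}_2))^{\prime\prime}$, but the paper does not assume essential duality at this point (it is only imposed later, in Section~4). The D'Antoni--Longo argument instead works directly with the commuting pair $(\pi_0(\mathcal{A}(\mathcal{O}_1))^{\prime\prime},\pi_0(\mathcal{A}(\mathcal{O}_2))^{\prime})$ and uses the joint cyclic-separating vector $\Omega$ together with the standard-split construction; the type~I factor then lands in $\pi_0(\mathcal{A}(\mathcal{O}_2))^{\prime\prime}=(\pi_0(\mathcal{A}(\mathcal{O}_2))^{\prime})^{\prime}$ by construction, without appeal to duality. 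You should reroute your argument through that pair rather than through $\widetilde{\mathcal{O}}^{\prime}$.
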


The most important condition in the paper is (2). This states that every normal states on
$\pi_0(\mathcal{A}(\mathcal{O}_1))^{\prime\prime}$ can be extended into an inner unital CP map
on $\pi_0(\mathcal{A}(\mathcal{O}_2))^{\prime\prime}$ and also on $\pi_0(\mathcal{A})^{\prime\prime}$.
This fact is very natural since an assignment of a state on a bounded space-time region is
a kind of physical operation which is usually described by CP map and called a quantum operation.
For each $\varphi\in\pi_0(\mathcal{A}(\mathcal{O}_1))^{\prime\prime}_{\ast,1}$,
the map $T$ given by the condition (2) is equivalent to the identity map on
$\pi_0(\mathcal{A}(\mathcal{O}_2^\prime))^{\prime\prime}$,
and is called a ``local" quantum operation.
On the basis of the above observation, we define the concept of local states as follows:
\begin{definition}[]\label{LC}
A unital completely positive map $T$ on $\mathcal{A}$ is called a local state on $\mathcal{A}$
with an inclusion pair of regions $\Lambda=(\mathcal{O}_1^\Lambda, \mathcal{O}_2^\Lambda)\in\mathcal{K}_\Subset$
if it satisfies the following conditions:\\
$(1)$  for all $A\in\mathcal{A}$, $B\in\mathcal{A}((\mathcal{O}_2^\Lambda)^\prime):=
\overline{\cup_{\mathcal{O}\in\mathcal{K},\mathcal{O}\subset(\mathcal{O}_2^\Lambda)^\prime}
\mathcal{A}(\mathcal{O})}^{\Vert\cdot\Vert}$, $T(AB)=T(A)B$.\\
$(2)$ there exists $\varphi\in\mathcal{A}(\mathcal{O}_1^\Lambda)_{\ast,1}$
such that $T(A)=\varphi(A)1$ for all $A\in\mathcal{A}(\mathcal{O}_1^\Lambda)$.\\
We denote by $E^L_\mathcal{A}(\Lambda)$ the set of local states on $\mathcal{A}$ with $\Lambda$.
\end{definition}
The concept defined in the above definition plays the role of (normal) state
on $\mathcal{A}(\mathcal{O})$ in a bounded space-time region $\mathcal{O}$, but it is not
a state on the global algebra $\mathcal{A}$. This is the reason why we call it ``local state."
For every local net $\{\mathcal{A}(\mathcal{O})\}_{\mathcal{O}\in\mathcal{K}}$ satisfying
the property B and the split property, and $\Lambda\in\mathcal{K}_{\Subset}$,
we can prove the existence of local states on $\mathcal{A}$ with region $\Lambda$.

Let $T$ be an element of $E^L_\mathcal{A}(\Lambda)$ and $\pi$ be a representation of $\mathcal{A}$.
The map $\pi\circ T$ on $\mathcal{A}$ to $\pi(\mathcal{A})^{\prime\prime}$
is then unital and completely positive:
\begin{equation}
(\pi\circ T)(A)=\pi(T(A)),\hspace{5mm}A\in\mathcal{A}.
\end{equation}
We can apply the representation theorems mentioned in the previous section to
$\pi\circ T\in \mathrm{CP}(\mathcal{A},\pi(\mathcal{A})^{\prime\prime})$.
We can use a restricted definition of local states as a more natural one adapted to the situation:
\begin{definition}[]
A unital CP map $T\in\mathrm{CP}(\mathcal{A},\pi(\mathcal{A})^{\prime\prime})$
is called a local state of $\mathcal{A}$ into $\pi(\mathcal{A})^{\prime\prime}$
with $\Lambda\in\mathcal{K}_\Subset$ if it satisfies the following conditions:\\
$(1)$  for all $A\in\mathcal{A}$ and $B\in\mathcal{A}((\mathcal{O}_2^\Lambda)^\prime)$, $T(AB)=T(A)\pi(B)$.\\
$(2)$ there exists $\varphi\in\mathcal{A}(\mathcal{O}_1^\Lambda)_{\ast,1}$
such that $T(X)=\varphi(X)1$ for all $X\in\mathcal{A}(\mathcal{O}_1^\Lambda)$.\\
We denote by $E^L_{\mathcal{A},\pi}(\Lambda)$
the set of local states of $\mathcal{A}$ into $\pi(\mathcal{A})^{\prime\prime}$ with $\Lambda$.
\end{definition}

\section{Local States and DHR-DR Theory}\label{se:4}
In this section, we assume that the vacuum representation $\pi_0$ satisfies the Haag duality.

\begin{definition}[property DHR \text{\cite[I, pp.228, (A.4)]{DHR2}}]
A representation $\pi$ of $\mathcal{A}$ is said to satisfy property DHR
if, for every $\Lambda\in \mathcal{K}_\Subset^{DC}$ and every projection operator
$E\in \pi^d(\mathcal{O}^\Lambda_1):=\pi(\mathcal{A}((\mathcal{O}^\Lambda_1)^\prime))^\prime$,
there is an isometry $W\in \pi^d(\mathcal{O}^\Lambda_2)$ such that $WW^\ast=E$ and $W^\ast W=1$.
\end{definition}

\begin{lemma}[\text{\cite[I, A.1. Proposition]{DHR2}}]
Let $\omega$ be a state on $\mathcal{A}$ such that
\begin{equation}
\lim_{n\rightarrow\infty}\Vert (\omega-\omega_0)|_{\mathcal{A}(\mathcal{O}_n^\prime)}\Vert=0
\end{equation}
for an increasing sequence of double cones $\{\mathcal{O}_n\}$. If the GNS representation $\pi_\omega$
satisfies the property DHR, there exists a double cone $\mathcal{O}$ such that
\begin{equation}
\pi_\omega|_{\mathcal{A}(\mathcal{O}^\prime)}\cong\pi_0|_{\mathcal{A}(\mathcal{O}^\prime)}.
\end{equation}
This condition is nothing but the DHR selection criterion.
\end{lemma}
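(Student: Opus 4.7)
The strategy is to combine two ingredients: the norm-convergence hypothesis produces a partial isometry $V$ intertwining $\pi_\omega$ and $\pi_0$ on a suitable $\mathcal{A}(\mathcal{O}_N^\prime)$, and property DHR (together with the analogous dual-side statement for $\pi_0$ derived from property $\mathrm{B}$ and Haag duality) then upgrades this partial isometry to a genuine unitary equivalence. Concretely, I would first choose $N$ large enough that $\epsilon_N := \|(\omega-\omega_0)|_{\mathcal{A}(\mathcal{O}_N^\prime)}\|<2$, which is possible by hypothesis. A classical Fell--Glimm--Kadison-type result on close states then guarantees that $\pi_\omega$ and $\pi_0$, viewed as representations of $\mathcal{A}(\mathcal{O}_N^\prime)$, admit a common vector state: there exist unit vectors $\xi\in\mathcal{H}_\omega$ and $\eta\in\mathcal{H}_0$ with $\langle\xi|\pi_\omega(A)\xi\rangle=\langle\eta|\pi_0(A)\eta\rangle$ for every $A\in\mathcal{A}(\mathcal{O}_N^\prime)$. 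GNS uniqueness applied to the cyclic subrepresentations generated by $\xi$ and $\eta$ then produces a partial isometry $V\in\textrm{\boldmath $B$}(\mathcal{H}_\omega,\mathcal{H}_0)$ with $V\pi_\omega(A)=\pi_0(A)V$ on $\mathcal{A}(\mathcal{O}_N^\prime)$. Writing $E:=V^\ast V$ and $F:=VV^\ast$, standard adjoint computations give $E\in\pi_\omega(\mathcal{A}(\mathcal{O}_N^\prime))^\prime=\pi_\omega^d(\mathcal{O}_N)$ and $F\in\pi_0^d(\mathcal{O}_N)$.

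Next, fix a double cone $\mathcal{O}$ with $\mathcal{O}_N\Subset\mathcal{O}$, so that $(\mathcal{O}_N,\mathcal{O})\in\mathcal{K}_\Subset^{DC}$. Property DHR applied to $\pi_\omega$ and $E$ yields an isometry $W\in\pi_\omega^d(\mathcal{O})$ with $WW^\ast=E$ and $W^\ast W=1$; the analogous statement for $\pi_0$, derived from the C$^\ast$-algebraic property $\mathrm{B}$ of the net together with Haag duality $\pi_0^d(\mathcal{O})=\pi_0(\mathcal{A}(\mathcal{O}))^{\prime\prime}$ and a standard approximation argument on the projection lattice of $\pi_0(\mathcal{A}(\mathcal{O}_N))^{\prime\prime}$, produces $W^\prime\in\pi_0^d(\mathcal{O})$ with $W^\prime W^{\prime\ast}=F$ and $W^{\prime\ast}W^\prime=1$. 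Define $U:=W^{\prime\ast}VW\colon\mathcal{H}_\omega\to\mathcal{H}_0$. Using the identity $V^\ast FV=V^\ast VV^\ast V=E^2=E$, one verifies $U^\ast U=W^\ast EW=W^\ast WW^\ast W=1$, and symmetrically $UU^\ast=1$, so $U$ is unitary. Since $W\in\pi_\omega^d(\mathcal{O})$ and $W^\prime\in\pi_0^d(\mathcal{O})$ commute with $\pi_\omega(A)$ and $\pi_0(A)$ respectively for every $A\in\mathcal{A}(\mathcal{O}^\prime)\subset\mathcal{A}(\mathcal{O}_N^\prime)$, the intertwining of $V$ on the larger algebra gives $U\pi_\omega(A)=W^{\prime\ast}\pi_0(A)VW=\pi_0(A)U$. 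This establishes the required unitary equivalence $\pi_\omega|_{\mathcal{A}(\mathcal{O}^\prime)}\cong\pi_0|_{\mathcal{A}(\mathcal{O}^\prime)}$.

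The main obstacle is the Fell--Glimm--Kadison step: extracting an exact coincidence of vector states $\omega_\xi=\omega_\eta$, rather than merely an approximate one, from the raw norm-closeness of the restricted functionals. This requires the strict bound $\epsilon_N<2$ and rests on a compactness/extremal argument in the state space combined with the GNS uniqueness clause to produce a genuine partial isometry (not just an approximate intertwiner). A subsidiary technical point is upgrading the C$^\ast$-algebraic property $\mathrm{B}$ of the net (stated for projections in $\mathcal{A}(\mathcal{O}_N)$) to a von Neumann algebraic form applicable to the projection $F\in\pi_0(\mathcal{A}(\mathcal{O}_N))^{\prime\prime}$; here Haag duality in $\pi_0$ is essential for locating the resulting isometry in the correct dual algebra $\pi_0^d(\mathcal{O})$.
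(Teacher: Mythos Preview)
The paper does not prove this lemma; it is stated with a citation to \cite[I, A.1.~Proposition]{DHR2} and no argument is supplied. Your proposal is therefore not being compared against anything in the present paper, but it is a faithful reconstruction of the original DHR argument and is essentially correct.

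One remark on presentation: the step you flag as the ``Fell--Glimm--Kadison obstacle'' is more cleanly handled by bypassing the search for an \emph{exact} common vector state. The relevant fact is that if two representations $\pi_1,\pi_2$ of a C$^\ast$-algebra are disjoint, then every $\pi_1$-normal state is at norm distance exactly $2$ from every $\pi_2$-normal state (this follows from orthogonality of central supports in $\pi_1\oplus\pi_2$). Hence $\epsilon_N<2$ immediately forces $\pi_\omega|_{\mathcal{A}(\mathcal{O}_N^\prime)}$ and $\pi_0|_{\mathcal{A}(\mathcal{O}_N^\prime)}$ to be non-disjoint, which is equivalent to the existence of a nonzero partial isometry $V\in\textrm{\boldmath $B$}(\mathcal{H}_\omega,\mathcal{H}_0)$ with $V\pi_\omega(A)=\pi_0(A)V$ for $A\in\mathcal{A}(\mathcal{O}_N^\prime)$. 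This yields your $E=V^\ast V$ and $F=VV^\ast$ directly, without the detour through matching vector states and cyclic subrepresentations. The remainder of your argument---applying property DHR on the $\omega$-side and property $\mathrm{B}$ with Haag duality on the vacuum side to produce isometries $W,W^\prime$, then setting $U=W^{\prime\ast}VW$---is exactly the standard mechanism and is correct as written.
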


\begin{definition}[localized endomorphism]
A $^\ast$-endomorphism $\rho:\pi_0(\mathcal{A})\rightarrow\pi_0(\mathcal{A})$
is called a localized endomorphisms
with support in $\mathcal{O}\in\mathcal{K}$ if it satisfies
$\rho(A)=A$ for all $A\in\pi_0(\mathcal{A}(\mathcal{O}^\prime))$.
\end{definition}
For every representation $\pi$ satisfying the DHR selection criterion for some $\mathcal{O}$,
i.e., $\pi_\omega|_{\mathcal{A}(\mathcal{O}^\prime)}\cong\pi_0|_{\mathcal{A}(\mathcal{O}^\prime)}$,
it is known \cite{DHR2} that there exists a localized endomorphism $\rho$
with support in $\mathcal{O}$ such that $\pi\cong \rho\circ\pi_0$.

For every $T\in E^L_\mathcal{A}(\Lambda)$,
we denote by $(\pi_{T,0},\mathcal{K}_{T,0},V_{T,0})$
the minimal Stinespring representation  of $\pi_0\circ T$.
It is easily checked that
\begin{align}
(\omega_0\circ T)(A) &=\omega_0(T(A))= \langle \Omega|(\pi_0\circ T)(A) \Omega\rangle\nonumber\\
 &= \langle \Omega|V_{T,0}^\ast\pi_{T,0}(A) V_{T,0}\Omega\rangle
 =\langle V_{T,0}\Omega|\pi_{T,0}(A) V_{T,0}\Omega\rangle,\hspace{5mm}A\in\mathcal{A}, \nonumber
\end{align}
and $\Vert (\omega_0\circ T-\omega_0)|_{\mathcal{A}((\mathcal{O}_2^\Lambda)^\prime)}\Vert=0$.
We define a projection $P_{T,0}\in\pi_{T,0}(\mathcal{A})^\prime$
with the range $\mathcal{H}_{\pi_{T,0}}$,
where $\mathcal{H}_{\pi_{T,0}}=\overline{\pi_{T,0}(\mathcal{A}) V_{T,0}\Omega}$.
For any local states $T$ with $\Lambda$, the following two equivalent conditions hold:
\begin{center}
$(i)$ $P_{T,0}=1$, \hspace{5mm}$(ii)$ $\mathcal{K}_{T,0}=\mathcal{H}_{\pi_{T,0}}$.
\end{center}
For the proof,
we use the assumption that $\Omega$ is a cyclic vector for $\pi_0(\mathcal{A}(\mathcal{O}))$,
and the relation $\pi_{T,0}(A)V_{T,0}=V_{T,0}\pi_0(A)$
for all $A\in \mathcal{A}((\mathcal{O}_2^\Lambda)^\prime)$.
Thus we have shown the following theorem:
\begin{theorem}[] \label{SrepDHR}
Let $T$ be a local state on $\mathcal{A}$ with $\Lambda$.
If $\pi_{T,0}$ satisfies the property DHR,
then there is a localized endomorphism $\rho_T$ on $\pi_0(\mathcal{A})$
with region $\mathcal{O}_2^\Lambda$
and $V_{T,0}\in\mathcal{A}^d(\mathcal{O}_2^\Lambda)$
such that
\begin{equation}
(\pi_0\circ T)(X)=V_{T,0}^\ast\rho_T(\pi_0(X))V_{T,0},\hspace{5mm}X\in\mathcal{A}.
\end{equation}
\end{theorem}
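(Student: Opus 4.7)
The plan is to identify $(\pi_{T,0},\mathcal{K}_{T,0},V_{T,0}\Omega)$ as the GNS triple of $\omega_0\circ T$, apply the DHR Lemma to derive the DHR selection criterion for $\pi_{T,0}$, and then use the DHR--DR reconstruction quoted right after the definition of localized endomorphism. Most of the setup is already in the excerpt: just before the theorem we know $(\omega_0\circ T)(A)=\langle V_{T,0}\Omega|\pi_{T,0}(A)V_{T,0}\Omega\rangle$, $(\omega_0\circ T)|_{\mathcal{A}((\mathcal{O}_2^\Lambda)^\prime)}=\omega_0|_{\mathcal{A}((\mathcal{O}_2^\Lambda)^\prime)}$, cyclicity of $V_{T,0}\Omega$ for $\pi_{T,0}(\mathcal{A})$ (since $P_{T,0}=1$), and the Stinespring intertwining $\pi_{T,0}(B)V_{T,0}=V_{T,0}\pi_0(B)$ for $B\in\mathcal{A}((\mathcal{O}_2^\Lambda)^\prime)$ (a multiplicative-domain consequence of condition (1) of the local-state definition, which makes $T$ the identity on that subalgebra).

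I would then apply the DHR Lemma to $\omega_0\circ T$, using any increasing sequence of double cones $\mathcal{O}_n\supset\mathcal{O}_2^\Lambda$: the equality on the causal complement of $\mathcal{O}_2^\Lambda$ makes $\|(\omega_0\circ T-\omega_0)|_{\mathcal{A}(\mathcal{O}_n^\prime)}\|$ identically zero, so the hypothesis that $\pi_{T,0}$ has property DHR yields a double cone $\mathcal{O}$ with $\pi_{T,0}|_{\mathcal{A}(\mathcal{O}^\prime)}\cong\pi_0|_{\mathcal{A}(\mathcal{O}^\prime)}$. The DHR--DR reconstruction then delivers a localized endomorphism $\rho$ of $\pi_0(\mathcal{A})$ with support in $\mathcal{O}$ and a unitary $U\colon\mathcal{K}_{T,0}\to\mathcal{H}_0$ satisfying $U\pi_{T,0}(X)U^\ast=\rho(\pi_0(X))$ for all $X\in\mathcal{A}$; if $\mathcal{O}\not\subset\mathcal{O}_2^\Lambda$, I would relocalize $\rho$ to a $\rho_T$ supported in $\mathcal{O}_2^\Lambda$ by composing with a DHR charge transporter (a unitary in $\pi_0(\mathcal{A})^{\prime\prime}$) and absorbing it into $U$.

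Finally, with $V_{T,0}$ redefined as $UV_{T,0}$ (now an operator on $\mathcal{H}_0$), the Stinespring identity immediately transforms into
$$(\pi_0\circ T)(X)=V_{T,0}^\ast\rho_T(\pi_0(X))V_{T,0},\qquad X\in\mathcal{A},$$
which is the formula in the theorem. To verify $V_{T,0}\in\mathcal{A}^d(\mathcal{O}_2^\Lambda)$, I would push the intertwining $\pi_{T,0}(B)V_{T,0}=V_{T,0}\pi_0(B)$ through $U$ to obtain $\rho_T(\pi_0(B))V_{T,0}=V_{T,0}\pi_0(B)$ for $B\in\mathcal{A}((\mathcal{O}_2^\Lambda)^\prime)$; since $\rho_T$ acts as the identity on $\pi_0(\mathcal{A}((\mathcal{O}_2^\Lambda)^\prime))$ by its localization, this collapses to $\pi_0(B)V_{T,0}=V_{T,0}\pi_0(B)$, placing $V_{T,0}$ in $\pi_0(\mathcal{A}((\mathcal{O}_2^\Lambda)^\prime))^\prime=\mathcal{A}^d(\mathcal{O}_2^\Lambda)$. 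The main obstacle I anticipate is the charge-transport step in the second paragraph: the DHR Lemma on its own produces a double cone $\mathcal{O}$ over which one has little control, so securing localization precisely in $\mathcal{O}_2^\Lambda$ relies on the standard DHR machinery for moving endomorphisms between double cones and, tacitly, on $\mathcal{O}_2^\Lambda$ being geometrically large enough to host the transported endomorphism.
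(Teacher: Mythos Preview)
Your proposal is correct and matches the paper's approach, which is essentially the paragraph immediately preceding the theorem (the paper writes ``Thus we have shown the following theorem'' after establishing $P_{T,0}=1$, i.e., that $(\pi_{T,0},\mathcal{K}_{T,0},V_{T,0}\Omega)$ is the GNS triple of $\omega_0\circ T$, and then relies on Lemma~4.1 and the DHR--DR reconstruction quoted after Definition~4.2). Your worry about relocalizing to $\mathcal{O}_2^\Lambda$ is overcautious: since $\|(\omega_0\circ T-\omega_0)|_{\mathcal{A}((\mathcal{O}_2^\Lambda)^\prime)}\|=0$ exactly, the proof of \cite[I, A.1 Proposition]{DHR2} already yields the DHR criterion for $\mathcal{O}_2^\Lambda$ itself (the ``some double cone'' there is one of the $\mathcal{O}_n$ once the norm difference drops below $2$), so no charge transport is needed.
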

This theorem shows that the DHR selection criterion is derived from 
a natural assumption described by the languages of local nets and of local states.

\begin{definition}[transportability]
A localized endomorphism $\rho$ with support in $\mathcal{O}$
is said to be transportable if, for all $a\in\mathbb{R}^4$,
there are a localized endomorphism $\rho^\prime$ with support in $\mathcal{O}+a$ and
a unitary operator $U$ of $\pi_0(\mathcal{A})^{\prime\prime}$ such that $\rho=Ad U\circ\rho^\prime$.
\end{definition}
It is shown in \cite[II, 2.2 Lemma]{DHR1} that two transportable localized endomorphisms
$\rho_1$ and $\rho_2$ are commutative, i.e., $\rho_1\circ\rho_2=\rho_2\circ\rho_1$.
Then the next corollary immediately follows from the above theorem:
\begin{corollary}[]
Let $\Lambda_1,\Lambda_2$ be elements of $\mathcal{K}_\Subset^{DC}$
such that $\mathcal{O}_2^{\Lambda_1}$ and $\mathcal{O}_2^{\Lambda_2}$
are causally separated, and $T_j\in E^L_\mathcal{A}(\Lambda_j)$
satisfying the property DHR, $j=1,2$.
If corresponding localized endomorphisms of $T_1$ and $T_2$ are transportable,
then $\pi_0\circ(T_1\circ T_2)=\pi_0\circ(T_2\circ T_1)$.
\end{corollary}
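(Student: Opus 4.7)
The plan is to apply Theorem~\ref{SrepDHR} to each $T_j$ and then reduce the identity $\pi_0\circ(T_1\circ T_2)=\pi_0\circ(T_2\circ T_1)$ to two commutativities: that of $V_{T_1,0}$ with $V_{T_2,0}$, and that of $\rho_{T_1}$ with $\rho_{T_2}$.

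First I would invoke Theorem~\ref{SrepDHR} to obtain, for $j=1,2$, a transportable localized endomorphism $\rho_j:=\rho_{T_j}$ with support in $\mathcal{O}_2^{\Lambda_j}$ together with an isometry $V_j:=V_{T_j,0}\in\mathcal{A}^d(\mathcal{O}_2^{\Lambda_j})$ such that
\begin{equation*}
(\pi_0\circ T_j)(X)=V_j^{\ast}\rho_j(\pi_0(X))V_j,\qquad X\in\mathcal{A}.
\end{equation*}
Since Haag duality in $\pi_0$ is standing in Section~\ref{se:4}, we may identify $\mathcal{A}^d(\mathcal{O}_2^{\Lambda_j})=\pi_0(\mathcal{A}(\mathcal{O}_2^{\Lambda_j}))^{\prime\prime}$, so each $V_j$ sits in the von~Neumann algebra generated by observables localized in $\mathcal{O}_2^{\Lambda_j}$.

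Next I would collect the two commutativities. The causal separation of $\mathcal{O}_2^{\Lambda_1}$ and $\mathcal{O}_2^{\Lambda_2}$, together with local commutativity of the net and passage to bicommutants, yields $[\pi_0(\mathcal{A}(\mathcal{O}_2^{\Lambda_1}))^{\prime\prime},\pi_0(\mathcal{A}(\mathcal{O}_2^{\Lambda_2}))^{\prime\prime}]=0$, so $V_1$ commutes with $V_2$ and with $V_2^{\ast}$. The transportability assumption, combined with \cite[II, 2.2~Lemma]{DHR1} (invoked immediately above the corollary), gives $\rho_1\circ\rho_2=\rho_2\circ\rho_1$. Furthermore, from $\mathcal{O}_2^{\Lambda_2}\subset(\mathcal{O}_2^{\Lambda_1})^{\prime}$ we have $V_2\in\pi_0(\mathcal{A}((\mathcal{O}_2^{\Lambda_1})^{\prime}))^{\prime\prime}$, the weak closure of the algebra on which $\rho_1$ acts as the identity; hence $\rho_1(V_2)=V_2$ and $\rho_1(V_2^{\ast})=V_2^{\ast}$, and symmetrically $\rho_2(V_1)=V_1$ and $\rho_2(V_1^{\ast})=V_1^{\ast}$. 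With these ingredients the corollary follows from the direct computation
\begin{align*}
\pi_0((T_1\circ T_2)(X))
&=V_1^{\ast}\rho_1\bigl(V_2^{\ast}\rho_2(\pi_0(X))V_2\bigr)V_1\\
&=V_1^{\ast}V_2^{\ast}\,\rho_1\rho_2(\pi_0(X))\,V_2V_1\\
&=V_2^{\ast}V_1^{\ast}\,\rho_2\rho_1(\pi_0(X))\,V_1V_2\\
&=V_2^{\ast}\rho_2\bigl(V_1^{\ast}\rho_1(\pi_0(X))V_1\bigr)V_2\\
&=\pi_0((T_2\circ T_1)(X)),
\end{align*}
where the second and fourth equalities use that $\rho_j$ is a $^{\ast}$-homomorphism fixing $V_i^{(\ast)}$, and the third uses the two commutativities just established.

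The main obstacle is the normality issue in the third step above: $\rho_j$ is, as stated, only defined on $\pi_0(\mathcal{A})$, whereas $V_i$ lies a priori only in its weak closure, so $\rho_1(V_2)$ requires interpretation. The standard remedy, available under Haag duality, is to extend $\rho_j$ uniquely to a normal $^{\ast}$-endomorphism of $\pi_0(\mathcal{A})^{\prime\prime}$ whose localization region is preserved; the desired fixing $\rho_j(V_i)=V_i$ then follows by weak continuity from the identity action on $\pi_0(\mathcal{A}((\mathcal{O}_2^{\Lambda_j})^{\prime}))$. Once this extension is in place the rest of the argument is purely algebraic and the two sides of the claimed equation agree on every $X\in\mathcal{A}$.
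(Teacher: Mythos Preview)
Your argument is correct and follows essentially the same route as the paper: apply Theorem~\ref{SrepDHR} to get $V_j\in\mathcal{A}^d(\mathcal{O}_2^{\Lambda_j})$ and $\rho_j$, use causal separation for $[V_1,V_2]=0$ and transportability for $\rho_1\rho_2=\rho_2\rho_1$, then run the same chain of equalities. Your explicit treatment of the normality issue (extending $\rho_j$ to the weak closure so that $\rho_j(V_i)=V_i$ makes sense) is in fact more careful than the paper, which uses this step without comment.
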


\begin{proof}
By the definition of $T_1$ and $T_2$, and by Theorem \ref{SrepDHR},
it holds that $V_{T_j,0}\in\mathcal{A}^d(\mathcal{O}_2^{\Lambda_j})$ for each $j=1,2$,
which implies $[V_{T_1},V_{T_2}]=0$.
Thus we have
\begin{align}
(\pi_0\circ(T_1\circ T_2))(A) &=(\pi_0\circ T_1)(T_2(A)) \nonumber\\
 &=V_{T_1,0}^\ast \rho_{T_1}(\pi_0(T_2(A)))V_{T_1,0}\nonumber\\
 &=V_{T_1,0}^\ast \rho_{T_1}((\pi_0\circ T_2)(A))V_{T_1,0} \nonumber\\
 &=V_{T_1,0}^\ast V_{T_2,0}^\ast\rho_{T_1}(\rho_{T_2}(\pi_0(X))) V_{T_2,0}V_{T_1,0} \nonumber\\
 &=V_{T_2,0}^\ast V_{T_1,0}^\ast\rho_{T_2}(\rho_{T_1}(\pi_0(X))) V_{T_1,0}V_{T_2,0} \nonumber\\
 &=V_{T_2,0}^\ast \rho_{T_2}(V_{T_1,0}^\ast\rho_{T_1}(\pi_0(X)) V_{T_1,0})V_{T_2,0} \nonumber\\
 &=V_{T_1,0}^\ast \rho_{T_2}((\pi_0\circ T_1)(A))V_{T_2,0}\nonumber\\
 &= V_{T_2,0}^\ast \rho_{T_2}(\pi_0(T_1(A)))V_{T_2,0} \nonumber\\
 &=(\pi_0\circ T_2)(T_1(A))=(\pi_0\circ(T_1\circ T_2))(A)
\end{align}
for all $A\in\mathcal{A}$.
\end{proof}

\section{Orthogonal CP-measure spaces associated to CP Maps}\label{se:5}

Let $\mathcal{X}$ be a unital C$^\ast$-algebra and
$\mathcal{M}$ be a $\sigma$-finite von Neumann algebra on a Hilbert space $\mathcal{H}$.

\begin{definition}[]
A triplet $(S,\mathcal{F},\mu)$ is called a CP-measure space
if it satisfies the following conditions:\\
$(1)$ $(S,\mathcal{F})$ is a mesurable space,
i.e. $S$ is a set and $\mathcal{F}$ is a $\sigma$-algebra of $S$;\\
$(2)$ $\mu$ is a $\mathrm{CP}(\mathcal{X},\mathcal{M})$-valued measure on $(S,\mathcal{F})$, i.e.,
$\mu:\mathcal{F}\rightarrow \mathrm{CP}(\mathcal{X},\mathcal{M})$
satisfying 
\begin{equation}
\rho(\mu(\cup_i \Delta_i)X)=\sum_i\rho(\mu(\Delta_i)X)
\end{equation}
for all mutually disjoint subset $\{\Delta_i\}_{i\in\mathbb{N}}$ of $\mathcal{F}$,
$\rho\in\mathcal{M}_\ast$ and $X\in\mathcal{X}$.

For every $\mathrm{CP}(\mathcal{X},\mathcal{M})$-valued measure $\mu$ on $(S,\mathcal{F})$,
we also write $\mu(\Delta,X)=\mu(\Delta)X$ for all $\Delta\in\mathcal{F}$ and $X\in\mathcal{X}$.
A CP-measure space $(S,\mathcal{F},\mu)$ is called a CP-measure space with barycenter
$T\in\mathrm{CP}(\mathcal{X},\mathcal{M})$ or a CP-measure space of $T$ if $T=\mu(S)$.
\end{definition}
Let $(S,\mathcal{F},\nu)$ be a measure space and $1\leq p\leq \infty$.
We write $L^p(S,\nu)$ or $L^p(\nu)$ as $L^p(S,\mathcal{F},\nu)$ for short in the paper.
\begin{definition}[]
$(1)$ Let $(S_1,\mathcal{F}_1,\mu_1)$ and $(S_2,\mathcal{F}_2,\mu_2)$ be CP-measure spaces with
barycenter $T$. $(S_1,\mathcal{F}_1,\mu_1)$ is dominated by $(S_2,\mathcal{F}_2,\mu_2)$, 
denoted by $(S_1,\mathcal{F}_1,\mu_1)\prec (S_2,\mathcal{F}_2,\mu_2)$, if
\begin{equation}
\{\mu_1(\Delta_1)\in\mathrm{CP}(\mathcal{X},\mathcal{M})\;|\;\Delta_1\in\mathcal{F}_1\}
\subseteq\{\mu_2(\Delta_2)\in
\mathrm{CP}(\mathcal{X},\mathcal{M})\;|\;\Delta_2\in\mathcal{F}_2\}
\end{equation}
and, for some normal faithful state $\rho$ on ${M}$,
there exists a projection $P\in L^\infty(S_2,\rho\circ \mu_2)$ such that
\begin{equation*}
(L^\infty(S_1,\rho\circ \mu_1),L^2(S_1,\rho\circ \mu_1))
\cong ( PL^\infty(S_2,\rho\circ \mu_2) P,P L^2(S_2,\rho\circ \mu_2)),
\end{equation*}
where for each $j=1,2$ $\rho\circ \mu_j$ is a probability measure on $(S_j,\mathcal{F}_j)$
defined by $(\rho\circ \mu_j)(\Delta)=\rho(\mu_j(\Delta,1))$ for all $\Delta\in\mathcal{F}_j$.\\
$(2)$ $(S_1,\mathcal{F}_1,\mu_1)$ is equivalent to $(S_2,\mathcal{F}_2,\mu_2)$, 
denoted by $(S_1,\mathcal{F}_1,\mu_1)\approx (S_2,\mathcal{F}_2,\mu_2)$, if
$(S_1,\mathcal{F}_1,\mu_1)\prec (S_2,\mathcal{F}_2,\mu_2)$ and
$(S_2,\mathcal{F}_2,\mu_2)\prec (S_1,\mathcal{F}_1,\mu_1)$.
\end{definition}

For any $T_1,T_2\in\mathrm{CP}(\mathcal{X},\mathcal{M})$, we write
$T_1\leq  T_2$ if $T_2-T_1\in \mathrm{CP}(\mathcal{X},\mathcal{M})$ \cite{Arveson69}.
It is  obvious that $\leq$ is an order relation on $\mathrm{CP}(\mathcal{X},\mathcal{M})$.
The order $\leq$ on $\mathrm{CP}(\mathcal{X},\mathcal{M})$ can be interpreted as
the restriction of the order $\leq^\prime$ on $\mathrm{CP}(\mathcal{X},\textrm{\boldmath $B$}(\mathcal{H}))$
into elements of $\mathrm{CP}(\mathcal{X},\textrm{\boldmath $B$}(\mathcal{H}))$ with range $\mathcal{M}$.
We do not distinguish $\leq$ and $\leq^\prime$ from now on.
Then the following lemma and theorem are known to hold:
\begin{lemma}[Arveson \text{\cite[Lemma 1.4.1]{Arveson69}}]\label{Arveson1}
Let $T$ and $T^\prime$ be elements of $\mathrm{CP}(\mathcal{X},\textrm{\boldmath $B$}(\mathcal{H}))$
such that $T^\prime\leq  T$. There exists $R\in \pi_T(\mathcal{X})^\prime$ such that $0\leq R\leq 1$ and
\begin{equation}
T^\prime(X)= V_T^\ast R\pi_T(X)V_T,\hspace{5mm}X\in\mathcal{X}.
\end{equation}
\end{lemma}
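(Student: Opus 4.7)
My plan is to use the minimal Stinespring representation $(\pi_T,\mathcal{K}_T,V_T)$ of $T$ and to construct $R$ via a bounded sesquilinear form on the dense subspace $\pi_T(\mathcal{X})V_T\mathcal{H}$ built from $T'$. Concretely, for finite sums define
\begin{equation}
\left[\sum_i \pi_T(X_i)V_T x_i,\sum_j \pi_T(Y_j)V_T y_j\right]
:=\sum_{i,j}\langle x_i\,|\,T'(X_i^\ast Y_j)y_j\rangle.
\end{equation}
The first task is to verify that this expression depends only on the vectors $\xi=\sum_i\pi_T(X_i)V_T x_i$ and $\eta=\sum_j\pi_T(Y_j)V_T y_j$, not on the particular representations as sums. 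This follows from the key inequality
\begin{equation}
0\le\sum_{i,j}\langle x_i\,|\,T'(X_i^\ast X_j)x_j\rangle\le\sum_{i,j}\langle x_i\,|\,T(X_i^\ast X_j)x_j\rangle=\|\xi\|^2,
\end{equation}
where the lower bound uses complete positivity of $T'$ (in the form given in Section 2.1 for CP maps into $\textrm{\boldmath$B$}(\mathcal{H})$) and the upper bound uses $T-T'\in\mathrm{CP}$ together with the Stinespring formula for $T$. Polarization then shows $[\cdot,\cdot]$ is a well-defined sesquilinear form of norm at most $1$ on $\pi_T(\mathcal{X})V_T\mathcal{H}$.

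By density (minimality of the Stinespring representation), the form extends to a bounded positive sesquilinear form on $\mathcal{K}_T$, and the Riesz representation theorem yields a unique $R\in\textrm{\boldmath$B$}(\mathcal{K}_T)$ with $0\le R\le 1$ and $[\xi,\eta]=\langle\xi\,|\,R\eta\rangle$. The formula $T'(X)=V_T^\ast R\pi_T(X)V_T$ is then immediate from the definition by evaluating $\langle V_T x\,|\,R\pi_T(X)V_T y\rangle=[\pi_T(1)V_T x,\pi_T(X)V_T y]=\langle x\,|\,T'(X)y\rangle$, using $\pi_T(1)V_T=V_T$.

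The remaining step, and the one I expect to be the only subtle point, is to show $R\in\pi_T(\mathcal{X})'$. For arbitrary $Z\in\mathcal{X}$ and simple tensors one computes, on the one hand,
\begin{equation}
\langle\pi_T(X)V_T x\,|\,R\pi_T(Z)\pi_T(Y)V_T y\rangle=[\pi_T(X)V_T x,\pi_T(ZY)V_T y]=\langle x\,|\,T'(X^\ast ZY)y\rangle,
\end{equation}
and on the other hand, using $\pi_T(Z)^\ast=\pi_T(Z^\ast)$,
\begin{equation}
\langle\pi_T(X)V_T x\,|\,\pi_T(Z)R\pi_T(Y)V_T y\rangle=[\pi_T(Z^\ast X)V_T x,\pi_T(Y)V_T y]=\langle x\,|\,T'(X^\ast ZY)y\rangle.
\end{equation}
Comparing and invoking density of $\pi_T(\mathcal{X})V_T\mathcal{H}$ gives $R\pi_T(Z)=\pi_T(Z)R$, completing the proof. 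The main obstacle is really the consistency/boundedness of the form $[\cdot,\cdot]$; once that is in hand, commutation with $\pi_T(\mathcal{X})$ and the Radon--Nikodym-type identity are a direct bookkeeping.
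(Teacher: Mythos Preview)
The paper does not give its own proof of this lemma; it is cited as Arveson's result \cite[Lemma~1.4.1]{Arveson69} and used without proof. Your argument is correct and is essentially Arveson's original one: build a positive sesquilinear form on $\mathrm{span}(\pi_T(\mathcal{X})V_T\mathcal{H})$ from $T'$, bound it by the inner product via $T'\le T$, extend by minimality, and read off $R$ and its commutation with $\pi_T(\mathcal{X})$.

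One small terminological remark: the step you call ``polarization'' is really Cauchy--Schwarz. The point is that on the algebraic tensor product $\mathcal{X}\otimes\mathcal{H}$ the form $[\,\cdot\,,\,\cdot\,]$ is positive semidefinite (by complete positivity of $T'$), so Cauchy--Schwarz gives $|[a,b]|\le[a,a]^{1/2}[b,b]^{1/2}\le\|\Phi(a)\|\,\|\Phi(b)\|$, where $\Phi(a)=\sum_i\pi_T(X_i)V_Tx_i$. This simultaneously shows that the form descends to the image (vanishing on $\ker\Phi$) and that it has norm at most $1$. Also, the identity $\pi_T(1)V_T=V_T$ that you use is valid because $\mathcal{X}$ is unital and the representation is minimal, so $\pi_T(1)=1_{\mathcal{K}_T}$. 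With these clarifications the proof is complete.
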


\begin{theorem}[Arveson \text{\cite[Theorem 1.4.2]{Arveson69}}]\label{Arveson2}
Let $T$ be an element of $\mathrm{CP}(\mathcal{X},\textrm{\boldmath $B$}(\mathcal{H}))$ and
$(\pi_T,\mathcal{K}_T,V_T)$ be the minimal Stinespring representation of $T$.
There exists an affine order isomorphism between
$[0,T]=\{T^\prime\in\mathrm{CP}(\mathcal{X},\textrm{\boldmath $B$}(\mathcal{H}))\;|\;0\leq T^\prime\leq T\}$
and $\{R\in\pi_T(\mathcal{X})^\prime\;|\; 0\leq R\leq 1\}$.
\end{theorem}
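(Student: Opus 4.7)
The plan is to define maps in both directions, one via Lemma \ref{Arveson1} and the other by an explicit formula, and verify that they are mutually inverse, affine, and order preserving. Write $\mathcal{C}(T) := \{R \in \pi_T(\mathcal{X})^\prime \mid 0 \leq R \leq 1\}$ for the target set.

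First I would define $\Phi : [0,T] \to \mathcal{C}(T)$ by $\Phi(T^\prime) = R$, with $R$ supplied by Lemma \ref{Arveson1}. Well-definedness reduces to uniqueness of $R$: if $R_1, R_2 \in \pi_T(\mathcal{X})^\prime$ both represent the same $T^\prime$, then $D := R_1 - R_2 \in \pi_T(\mathcal{X})^\prime$ satisfies $V_T^\ast D \pi_T(X) V_T = 0$ for every $X \in \mathcal{X}$. Using the commutation $\pi_T(Y)^\ast D = D \pi_T(Y^\ast)$, this yields
\begin{equation}
\langle \pi_T(Y) V_T h \mid D \pi_T(X) V_T k\rangle = \langle h \mid V_T^\ast D \pi_T(Y^\ast X) V_T k\rangle = 0
\end{equation}
for all $X, Y \in \mathcal{X}$ and $h, k \in \mathcal{H}$. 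Since $\overline{\mathrm{span}}(\pi_T(\mathcal{X}) V_T \mathcal{H}) = \mathcal{K}_T$ by minimality of the Stinespring representation, $D = 0$.

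Next I would construct the inverse $\Psi : \mathcal{C}(T) \to [0,T]$ by $\Psi(R)(X) := V_T^\ast R \pi_T(X) V_T$. Since $\pi_T(\mathcal{X})^\prime$ is a von Neumann algebra, both $R^{1/2}$ and $(1-R)^{1/2}$ lie in it and thus commute with every $\pi_T(X)$. Consequently,
\begin{equation}
\Psi(R)(X) = (R^{1/2} V_T)^\ast \pi_T(X) (R^{1/2} V_T)
\end{equation}
is of Stinespring form, hence completely positive, and $(T - \Psi(R))(X) = ((1-R)^{1/2} V_T)^\ast \pi_T(X) ((1-R)^{1/2} V_T)$ is likewise CP, so $\Psi(R) \in [0,T]$. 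The identity $\Psi \circ \Phi = \mathrm{id}_{[0,T]}$ is the content of Lemma \ref{Arveson1}, while $\Phi \circ \Psi = \mathrm{id}_{\mathcal{C}(T)}$ follows from the uniqueness just established.

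Finally I would verify the affine and order structure. Affinity of $\Psi$ is immediate from linearity of $R \mapsto V_T^\ast R \pi_T(\cdot) V_T$, and $\Phi$ inherits it via uniqueness. For the order isomorphism, $R_1 \leq R_2$ in $\mathcal{C}(T)$ gives $\Psi(R_2) - \Psi(R_1) = \Psi(R_2 - R_1)$, which is CP by the same square-root trick applied to $R_2 - R_1 \in \pi_T(\mathcal{X})^\prime_+$; conversely, if $T^\prime_1 \leq T^\prime_2$ in $[0,T]$, then $T^\prime_2 - T^\prime_1$ is CP with $T - (T^\prime_2 - T^\prime_1) = (T - T^\prime_2) + T^\prime_1$ CP, so Lemma \ref{Arveson1} applied to $T^\prime_2 - T^\prime_1 \leq T$ produces $S \in \mathcal{C}(T)$ with $\Psi(S) = T^\prime_2 - T^\prime_1$; uniqueness forces $S = \Phi(T^\prime_2) - \Phi(T^\prime_1) \geq 0$. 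The main obstacle I expect is the uniqueness step, since it is the single point that both makes $\Phi$ well defined and powers the converse direction of order preservation; everything else collapses to the square-root trick inside the commutant $\pi_T(\mathcal{X})^\prime$.
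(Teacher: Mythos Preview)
Your proof is correct and follows the standard route: use Lemma~\ref{Arveson1} to go from $[0,T]$ to the commutant interval, use the explicit formula $R\mapsto V_T^\ast R\pi_T(\cdot)V_T$ for the inverse, and let the minimality of the Stinespring triple do the work for uniqueness and hence for both well-definedness of $\Phi$ and the converse order preservation. Note, however, that the paper does not supply its own proof of this statement; Theorem~\ref{Arveson2} is simply quoted from \cite[Theorem~1.4.2]{Arveson69}, and your argument is essentially Arveson's original one.
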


Since $\mathcal{M}\subset\textrm{\boldmath $B$}(\mathcal{H})$ implies
$\mathrm{CP}(\mathcal{X},\mathcal{M})\subset\mathrm{CP}(\mathcal{X},\textrm{\boldmath $B$}(\mathcal{H}))$,
for every $T\in\mathrm{CP}(\mathcal{X},\mathcal{M})$
there exists a minimal Stinespring representation $(\pi_T,\mathcal{K}_T,V_T)$ of $T$.
Then, for a $R\in\{R\in\pi_T(\mathcal{X})^\prime\;|\; 0\leq R\leq 1\}$,
$T_R(X):=V_T^\ast R\pi_T(X)V_T$, $X\in\mathcal{X}$, is
an element of $\mathrm{CP}(\mathcal{X},\textrm{\boldmath $B$}(\mathcal{H}))$ but 
it is not always that of $\mathrm{CP}(\mathcal{X},\mathcal{M})$. We consider a set
\begin{equation}
\{R\in\pi_T(\mathcal{X})^\prime\;|\; 0\leq R\leq 1,T_R\in\mathrm{CP}(\mathcal{X},\mathcal{M})\}.
\end{equation}
This set is seen to be affinely order isomorphic to
$\{T^\prime\in\mathrm{CP}(\mathcal{X},\mathcal{M})\;|\;0\leq T^\prime\leq T\}$.
We denote by $\pi_T(\mathcal{X})^c$ the $\sigma$-weak closure of
a linear subspace of $\textrm{\boldmath $B$}(\mathcal{H})$ linearly spanned by
$\{R\in\pi_T(\mathcal{X})^\prime\;|\; 0\leq R\leq 1,T_R\in\mathrm{CP}(\mathcal{X},\mathcal{M})\}$
and call this operator system an $\mathcal{M}$-relative commutant of $\pi_T(\mathcal{X})$.
\begin{lemma}\label{M-rel}
$\pi_T(\mathcal{X})^c$ is a von Neumann algebra on $\mathcal{K}_T$.
\end{lemma}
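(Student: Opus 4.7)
The plan is to identify $\pi_T(\mathcal{X})^c$ with an explicit von Neumann algebra built out of $\pi_T(\mathcal{X})$ and a lifted copy of $\mathcal{M}'$ inside $\pi_T(\mathcal{X})'$. First I would apply the commutant lifting theorem (Theorem \ref{CommLift}) with $\mathcal{B}=\pi_T(\mathcal{X})$ and $V=V_T$: since $V_T^\ast\pi_T(\mathcal{X})V_T=T(\mathcal{X})\subset\mathcal{M}$, every $M'\in\mathcal{M}'$ lies in $(V_T^\ast\pi_T(\mathcal{X})V_T)'$, so we obtain a $^\ast$-homomorphism $\theta:\mathcal{M}'\to\pi_T(\mathcal{X})'$ satisfying $V_TM'=\theta(M')V_T$, and therefore $M'V_T^\ast=V_T^\ast\theta(M')$ on taking adjoints. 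Then set $\mathcal{N}:=(\pi_T(\mathcal{X})\cup\theta(\mathcal{M}'))'\subset\mathcal{B}(\mathcal{K}_T)$, which is automatically a unital von Neumann algebra on $\mathcal{K}_T$.

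The heart of the proof is the equality
\begin{equation*}
\{R\in\pi_T(\mathcal{X})'\;|\; 0\leq R\leq 1,\; T_R\in\mathrm{CP}(\mathcal{X},\mathcal{M})\}=\{R\in\mathcal{N}\;|\; 0\leq R\leq 1\}.
\end{equation*}
For the inclusion $\supseteq$, I would just compute: if $R$ commutes with both $\pi_T(\mathcal{X})$ and $\theta(\mathcal{M}')$, then for any $M'\in\mathcal{M}'$ and $X\in\mathcal{X}$,
\begin{equation*}
V_T^\ast R\pi_T(X)V_TM'=V_T^\ast R\pi_T(X)\theta(M')V_T=V_T^\ast\theta(M')R\pi_T(X)V_T=M'V_T^\ast R\pi_T(X)V_T,
\end{equation*}
so $T_R(X)$ commutes with $\mathcal{M}'$ and hence lies in $\mathcal{M}''=\mathcal{M}$.

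The inclusion $\subseteq$ is the step I expect to require the most care, because it is where minimality of the Stinespring representation enters. Given $R$ on the left hand side, set $C:=[\theta(M'),R]\in\pi_T(\mathcal{X})'$. Reversing the computation above shows $V_T^\ast C\pi_T(X)V_T=0$ for every $X\in\mathcal{X}$. Using $C\in\pi_T(\mathcal{X})'$, I would then expand
\begin{equation*}
\langle \pi_T(X)V_Th\mid C\pi_T(Y)V_Tk\rangle=\langle h\mid V_T^\ast C\pi_T(X^\ast Y)V_Tk\rangle=0
\end{equation*}
for arbitrary $X,Y\in\mathcal{X}$ and $h,k\in\mathcal{H}$, and conclude $C=0$ from the density $\overline{\mathrm{span}}(\pi_T(\mathcal{X})V_T\mathcal{H})=\mathcal{K}_T$ guaranteed by minimality. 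Thus $R$ commutes with every $\theta(M')$, so $R\in\mathcal{N}$.

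Finally, since every element of the von Neumann algebra $\mathcal{N}$ is (after scaling) a linear combination of four elements of its positive unit ball, the linear span of $\{R\in\mathcal{N}\;|\;0\leq R\leq 1\}$ is already equal to $\mathcal{N}$, so taking the $\sigma$-weak closure gives $\pi_T(\mathcal{X})^c=\mathcal{N}$, which proves the lemma. The single nontrivial ingredient is the passage from $T_R(\mathcal{X})\subset\mathcal{M}$ to commutation of $R$ with $\theta(\mathcal{M}')$, and this is exactly where Stinespring minimality does the work.
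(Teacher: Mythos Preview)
Your argument is correct, and it takes a genuinely different route from the paper's proof. The paper invokes the Hilbert $\mathcal{M}$-module GNS construction (Theorem~\ref{PGNS}) together with Paschke's order isomorphism $[0,T]\cong\{R\in\varpi_T(\mathcal{X})'\mid 0\le R\le 1\}$, and then transports $\varpi_T(\mathcal{X})'$---already a von Neumann algebra inside the W$^\ast$-algebra $\mathcal{B}^a(E_T)$---through the normal faithful representation $\rho$ and the unitary identifying $\mathcal{G}_{E_T}$ with $\mathcal{K}_T$. In contrast, you stay entirely in the Stinespring picture and use Arveson's commutant lifting (Theorem~\ref{CommLift}) to manufacture $\theta(\mathcal{M}')\subset\pi_T(\mathcal{X})'$, then prove the concrete identity $\pi_T(\mathcal{X})^c=(\pi_T(\mathcal{X})\cup\theta(\mathcal{M}'))'$ by a direct commutator computation plus the minimality density argument. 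Your approach is more elementary in that it avoids the Hilbert module machinery altogether, and it yields as a bonus an explicit description of $\pi_T(\mathcal{X})^c$ as a relative commutant; the paper's approach, on the other hand, makes transparent the link between $\pi_T(\mathcal{X})^c$ and the intrinsic commutant $\varpi_T(\mathcal{X})'$ in $\mathcal{B}^a(E_T)$, which is the natural home for the order structure on $[0,T]$.
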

\begin{proof}
By \cite[Proposition 5.4]{Paschke73}, there exists an affine order isomorphism between
$[0,T]=\{T^\prime\in\mathrm{CP}(\mathcal{X},\mathcal{M})\;|\;0\leq T^\prime\leq T\}$
and $\{R\in\varpi_T(\mathcal{X})^\prime\;|\; 0\leq R\leq 1\}$,
where $\varpi_T(\mathcal{X})^\prime=\{A\in\mathcal{B}^a(E_T)\;|\;AB=BA\;
\text{for}\;\text{all}\;B\in \varpi_T(\mathcal{X})\}$.
Let $\rho:\mathcal{B}^a(E_T)\rightarrow\textrm{\boldmath $B$}(\mathcal{G}_{E_T})$
be a normal faithful representation (see subsection \ref{subse:2.2}).
By the unitary equivalence of $\rho\circ\varpi_T$ and $\pi_T$,
there exists a unitary operator $U: \mathcal{G}_{E_T}\rightarrow \mathcal{K}_T$
such that $\pi_T(X)=U(\rho\circ\varpi_T)(X)U^\ast$ for all $X\in\mathcal{X}$.
Then there is a one-to-one correspondence between
$\{URU^\ast|\;R\in\rho(\varpi_T(\mathcal{X})^\prime),0\leq R\leq 1\}$
and $\{R\in\pi_T(\mathcal{X})^c\;|\; 0\leq R\leq 1\}$,
which completes the proof.
\end{proof}

\begin{lemma}[]\label{StineOrth}
Let $T_1$ and $T_2$ be elements of $\mathrm{CP}(\mathcal{X},\mathcal{M})$,
and $T=T_1+T_2$.
The following three conditions are equivalent:\\
$(1)$ $(\pi_{T},\mathcal{K}_{T},V_{T})=(\pi_{T_1},\mathcal{K}_{T_1},V_{T_1})
\oplus (\pi_{T_2},\mathcal{K}_{T_2},V_{T_2})$G\\
$(2)$ There exists a projection $P\in\pi_T(\mathcal{X})^c$ such that
\begin{equation}
T_1(X)= V_T^\ast P\pi_T(X)V_T,\hspace{3mm} T_2(X)=V_T^\ast (1-P)\pi_T(X)V_T,\hspace{5mm}X\in\mathcal{X};
\end{equation}
$(3)$ If $T^\prime\in\mathrm{CP}(\mathcal{X},\mathcal{M})$ satisfies
$T^\prime \leq T_1$ and $T^\prime \leq T_2$, then $T^\prime=0$.\\
If $T_1$ and $T_2$ satisfy the above equivalent conditions,
they are said to be (mutually) orthogonal, denoted by $T_1\;\bot\; T_2$.
\end{lemma}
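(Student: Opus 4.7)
The plan is to prove the equivalences in the cycle $(1)\Leftrightarrow(2)$ and $(2)\Leftrightarrow(3)$, exploiting two structural facts: the uniqueness clause of the minimal Stinespring representation (Theorem \ref{Stinespring}) on the one hand, and the affine order isomorphism between $[0,T]$ in $\mathrm{CP}(\mathcal{X},\mathcal{M})$ and $\{R\in\pi_T(\mathcal{X})^c\mid 0\leq R\leq 1\}$, which is the $\mathcal{M}$-valued analogue of Theorem \ref{Arveson2} and is available thanks to Lemma \ref{M-rel}, on the other.

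For $(1)\Rightarrow(2)$ I would let $P$ be the orthogonal projection onto the direct summand $\mathcal{K}_{T_1}$ of $\mathcal{K}_T=\mathcal{K}_{T_1}\oplus\mathcal{K}_{T_2}$. Since $\pi_T=\pi_{T_1}\oplus\pi_{T_2}$, the projection $P$ commutes with $\pi_T(\mathcal{X})$; using $V_T=V_{T_1}\oplus V_{T_2}$ a direct computation yields $V_T^{\ast}P\pi_T(X)V_T=T_1(X)\in\mathcal{M}$ and likewise $V_T^{\ast}(1-P)\pi_T(X)V_T=T_2(X)\in\mathcal{M}$, placing both $P$ and $1-P$ in $\pi_T(\mathcal{X})^c$. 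For $(2)\Rightarrow(1)$, set $P_1:=P$, $P_2:=1-P$, $\mathcal{K}_i:=P_i\mathcal{K}_T$, $V_i:=P_iV_T$. Since $P_i\in\pi_T(\mathcal{X})'$, each $(\pi_T|_{\mathcal{K}_i},\mathcal{K}_i,V_i)$ is a Stinespring representation of $T_i$, and it is minimal because
\begin{equation}
\overline{\mathrm{span}}(\pi_T(\mathcal{X})V_i\mathcal{H})=P_i\overline{\mathrm{span}}(\pi_T(\mathcal{X})V_T\mathcal{H})=P_i\mathcal{K}_T=\mathcal{K}_i.
\end{equation}
The uniqueness clause of Theorem \ref{Stinespring} identifies this with $(\pi_{T_i},\mathcal{K}_{T_i},V_{T_i})$, giving the direct sum decomposition of $(\pi_T,\mathcal{K}_T,V_T)$.

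For $(2)\Rightarrow(3)$, suppose $T'\leq T_1$ and $T'\leq T_2$; then $T'\leq T$, so by Lemma \ref{Arveson1} together with the $\mathcal{M}$-valued version of Theorem \ref{Arveson2} there is $R\in\pi_T(\mathcal{X})^c$ with $0\leq R\leq 1$ and $T'(X)=V_T^{\ast}R\pi_T(X)V_T$. The order isomorphism sends $T_1\mapsto P$, $T_2\mapsto 1-P$, so $T'\leq T_1,T_2$ becomes $R\leq P$ and $R\leq 1-P$. Because $P$ is a projection, $R\leq P$ forces $(1-P)R^{1/2}=0$ and symmetrically $PR^{1/2}=0$, hence $R=0$ and $T'=0$. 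For $(3)\Rightarrow(2)$, the affine order iso yields $R_1,R_2\in\pi_T(\mathcal{X})^c$ with $0\leq R_i\leq 1$ representing $T_1,T_2$; from $T_1+T_2=T$ and injectivity of the iso, $R_1+R_2=1$. Since $\pi_T(\mathcal{X})^c$ is a von Neumann algebra by Lemma \ref{M-rel}, $R:=R_1(1-R_1)$ lies in $\pi_T(\mathcal{X})^c$, is positive, and is dominated by both $R_1$ and $R_2=1-R_1$. The associated $T'\in\mathrm{CP}(\mathcal{X},\mathcal{M})$ satisfies $T'\leq T_1,T_2$, so hypothesis (3) forces $T'=0$, hence $R=0$. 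Thus $R_1^{2}=R_1$, and $P:=R_1$ is the projection required by (2).

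I expect the main subtlety to be the careful bookkeeping of \emph{which} commutant one is working in. A naive application of Arveson's lemma produces only $R\in\pi_T(\mathcal{X})'$, and there is no reason for the associated dilated map $T_R$ to take values in $\mathcal{M}$ rather than in the ambient $\textrm{\boldmath $B$}(\mathcal{H})$; nor, a priori, would $\pi_T(\mathcal{X})'$ be closed under the product needed to form $R_1(1-R_1)$. Both difficulties are resolved by working inside the $\mathcal{M}$-relative commutant $\pi_T(\mathcal{X})^c$ and invoking Lemma \ref{M-rel} to know that this operator system is actually a von Neumann algebra. Once this algebraic upgrade is in place, the proof is an essentially mechanical translation back and forth through the affine order isomorphism.
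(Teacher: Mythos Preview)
Your proof is correct and follows essentially the same route as the paper's: the cycle $(1)\Leftrightarrow(2)$ via the uniqueness of the minimal Stinespring triple, and $(2)\Leftrightarrow(3)$ via the Arveson order isomorphism together with Lemma \ref{M-rel}, using the key trick $R_1(1-R_1)$ to force $R_1$ to be a projection. The only cosmetic difference is that where you invoke injectivity of the affine order isomorphism to pass from $T'=0$ to $R=0$, the paper phrases this as ``cyclicity of $V_T$''; these are the same fact.
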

\begin{proof}
The proof of the lemma is similar to that of \cite[Lemma 4.1.19]{BR1}.\\
$(1)\Rightarrow (2)$ Let $P$ be a projection on $\mathcal{K}_T$ with the range $\mathcal{K}_{T_1}$.
It is obvious that $P\in\pi_T(\mathcal{X})^c$, and that 
$T_1(X)= V_T^\ast P\pi_T(X)V_T$, $T_2(X)=V_T^\ast (1-P)\pi_T(X)V_T$, $X\in\mathcal{X}$.\\
$(2)\Rightarrow (1)$ If one sets $(\pi_{1},\mathcal{K}_{1},V_{1})=(P\pi_{T},P\mathcal{K}_{T},PV_{T})$,
then it follows from the uniqueness statement of Theorem
\ref{Stinespring} that $(\pi_{1},\mathcal{K}_{1},V_{1})$
is unitarily equivalent to $(\pi_{T_1},\mathcal{K}_{T_1},V_{T_1})$.
Similarly, $((1-P)\pi_{T},(1-P)\mathcal{K}_{T},(1-P)V_{T})$ is unitarily equivalent to
$(\pi_{T_2},\mathcal{K}_{T_2},V_{T_2})$.\\
$(2)\Rightarrow (3)$ Suppose that $T^\prime\in\mathrm{CP}(\mathcal{X},\mathcal{M})$ satisfies
$T^\prime \leq T_1$ and $T^\prime \leq T_2$.
By Lemma \ref{Arveson1}, there exists a positive operator $R\in \pi_T(\mathcal{X})^c$ such that
\begin{equation}
T^\prime(X)= V_T^\ast R\pi_T(X)V_T,\hspace{5mm}X\in\mathcal{X},
\end{equation}
and that $R\leq P$ and $R\leq 1-P$. Then, one has $0\leq (1-P)R(1-P)\leq (1-P)P(1-P)=0$ and
$0\leq PRP\leq P(1-P)P=0$. Hence, $R^{1/2}(1-P)=R^{1/2}P=0$, and so $R^{1/2}=0$.
Thus $T^\prime=0$.\\
$(3)\Rightarrow (2)$
By Lemma \ref{Arveson1}, there exists a positive operator $R\in \pi_T(\mathcal{X})^c$ such that
$0\leq R\leq 1$ and
\begin{equation}
T_1(X)= V_T^\ast R\pi_T(X)V_T,\hspace{5mm}X\in\mathcal{X}.
\end{equation}
If one introduces $R^\prime=R(1-R)$,
then $R^\prime\in (\pi_T(\mathcal{X})^c)_+$ by Lemma \ref{M-rel}
and the linear map $T^\prime$ defined by
\begin{equation}
T^\prime(X)= V_T^\ast R^\prime\pi_T(X)V_T,\hspace{5mm}X\in\mathcal{X},
\end{equation}
is completely positive. Then $T^\prime\leq T_1$, $T^\prime\leq T_2$, and condition $(3)$
together with cyclicity of $V_T$ imply that $R(1-R)=0$, i.e., $R$ is a projection operator on $\mathcal{K}_T$.
\end{proof}
\begin{definition}[]
A triplet $(S,\mathcal{F},\mu)$ is said to be an orthogonal CP-measure space with barycenter
$T\in\mathrm{CP}(\mathcal{X},\mathcal{M})$ if it is a CP-measure space of $T$ and satisfies
$\mu(\Delta)$ $\bot$ $\mu(\Delta^c)$ in the sense of Lemma \ref{StineOrth},
for all $\Delta\in\mathcal{F}$.
\end{definition}

\begin{lemma}[]\label{tomita5}
If $(S,\mathcal{F},\mu)$ is a CP-measure space of $T\in\mathrm{CP}(\mathcal{X},\mathcal{M})$,
then there exists a unique linear map $L^\infty(\nu)\ni f\mapsto \kappa_\mu(f)\in\pi_T(\mathcal{X})^c$
defined by
\begin{equation}
V_T^\ast \kappa_\mu(f)\pi_T(X)V_T=\int f(s)\;d\mu(s,X),\hspace{5mm}f\in L^\infty(\nu),X\in\mathcal{X},
\end{equation}
i.e., for every $\rho\in\mathcal{M}_\ast$,
\begin{equation}
\rho(V_T^\ast \kappa_\mu(f)\pi_T(X)V_T)=\int f(s)\;d(\rho(\mu(s,X))),
\hspace{5mm}f\in L^\infty(\nu),X\in\mathcal{X}, \label{tomita0}
\end{equation}
which is positive and contractive,
where $\nu$ is a (scalar-valued) positive finite measure which is equivalent to $\rho\circ\mu$
for some normal faithful state $\rho$ on $\mathcal{M}$.
Furthermore, if $f\in L^\infty(\nu)_+$ satisfies $\kappa_\mu(f)=0$, then $f=0$.\\
If $L^\infty(\nu)$ is equipped with the $\sigma(L^\infty(\nu),L^1(\nu))$-topology and
$\pi_T(\mathcal{X})^c$ with the weak topology, then the map $\kappa_\mu$ is continuous.
\end{lemma}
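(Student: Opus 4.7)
The plan is to construct $\kappa_\mu$ first on characteristic functions via the Arveson order isomorphism of Lemma 5.4 and Theorem 5.5, then extend linearly to simple functions, then by uniform approximation to all of $L^\infty(\nu)$, and finally verify the listed properties. For each $\Delta\in\mathcal{F}$, the inequality $0\le\mu(\Delta)\le T$ in $\mathrm{CP}(\mathcal{X},\mathcal{M})$ (which follows from $\mu(\Delta)+\mu(\Delta^c)=T$) combined with Lemma 5.4 yields a unique $R_\Delta\in\pi_T(\mathcal{X})'$ with $0\le R_\Delta\le 1$ and $V_T^\ast R_\Delta\pi_T(X)V_T=\mu(\Delta,X)$; since $T_{R_\Delta}=\mu(\Delta)\in\mathrm{CP}(\mathcal{X},\mathcal{M})$, each $R_\Delta$ belongs to the generating set of $\pi_T(\mathcal{X})^c$. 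I then define $\kappa_\mu(\chi_\Delta):=R_\Delta$ and extend linearly to simple functions.

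For well-definedness modulo $\nu$-null sets, observe that if $\nu(\Delta)=0$ then, by the equivalence $\nu\sim\rho\circ\mu$ for a fixed normal faithful state $\rho$, one has $\rho(\mu(\Delta,1))=0$, whence $\mu(\Delta,1)=0$ by faithfulness; this gives $V_T^\ast R_\Delta V_T=0$, so $R_\Delta^{1/2}V_T=0$, and since $R_\Delta$ commutes with $\pi_T(\mathcal{X})$ we deduce $V_T^\ast R_\Delta\pi_T(X)V_T=0$ for all $X$, forcing $R_\Delta=0$ by the uniqueness in Lemma 5.4. Contractivity on simple functions is obtained from $\sum_i R_{\Delta_i}=1_{\mathcal{K}_T}$ for any measurable partition $\{\Delta_i\}$ of $S$ (since $\mu(S)=T$ corresponds to the identity under the affine isomorphism of Theorem 5.5), giving $\bigl\|\sum c_i R_{\Delta_i}\bigr\|\le\max|c_i|=\|f\|_\infty$. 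Since bounded measurable functions are uniform limits of simple functions, $\kappa_\mu$ extends by norm-continuity to $L^\infty(\nu)\to\pi_T(\mathcal{X})^c$, and the integral identity (\ref{tomita0}) extends from characteristic functions to $L^\infty(\nu)$ by linearity and dominated convergence. Positivity, the uniqueness assertion, and the implication $\kappa_\mu(f)=0\Rightarrow f=0$ for $f\in L^\infty(\nu)_+$ (apply $\rho$ to $V_T^\ast\kappa_\mu(f)V_T=\int f\,d\mu(s,1)$ and use faithfulness plus $\nu\sim\rho\circ\mu$) follow readily from the construction.

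The remaining step, and the main obstacle, is weak continuity. Given $f_\alpha\to f$ in $\sigma(L^\infty(\nu),L^1(\nu))$, I must show $\langle u,\kappa_\mu(f_\alpha)v\rangle\to\langle u,\kappa_\mu(f)v\rangle$ for all $u,v\in\mathcal{K}_T$. By the minimality of the Stinespring representation it suffices to verify this on vectors of the form $u=\pi_T(X)V_T h$, $v=\pi_T(Y)V_T k$ with $X,Y\in\mathcal{X}$, $h,k\in\mathcal{H}$. Since $\kappa_\mu(f_\alpha)\in\pi_T(\mathcal{X})'$, the matrix element rewrites as
\begin{equation}
\langle u,\kappa_\mu(f_\alpha)v\rangle=\omega_{h,k}\bigl(V_T^\ast\kappa_\mu(f_\alpha)\pi_T(X^\ast Y)V_T\bigr),
\end{equation}
where $\omega_{h,k}(\cdot)=\langle h,\cdot\, k\rangle$ restricts to a normal functional on $\mathcal{M}$. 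The key technical point is that for every $\sigma\in\mathcal{M}_\ast$ and $Z\in\mathcal{X}$, the complex measure $\sigma\circ\mu(\cdot,Z)$ is absolutely continuous with respect to $\nu$ (because any $\nu$-null set $\Delta$ forces $R_\Delta=0$, as shown above, hence $\mu(\Delta,Z)=0$); Radon--Nikodym then supplies $g_{\sigma,Z}\in L^1(\nu)$ with $d(\sigma\circ\mu)(s,Z)=g_{\sigma,Z}(s)\,d\nu(s)$. Applying this with $\sigma=\omega_{h,k}|_{\mathcal{M}}$ and $Z=X^\ast Y$ and using the integral identity converts the desired convergence into $\int f_\alpha g_{\sigma,Z}\,d\nu\to\int f g_{\sigma,Z}\,d\nu$, which is precisely the $\sigma(L^\infty,L^1)$-convergence assumed. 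This completes the continuity statement and thereby the lemma.
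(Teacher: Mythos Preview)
Your proof is correct and complete. The route differs from the paper's in how $\kappa_\mu$ is constructed. The paper defines, for each $f\in L^\infty(\nu)$, the map $T_f(X)=\int f\,d\mu(\cdot,X)$ directly and establishes the sesquilinear estimate
\[
\Bigl|\sum_{i,j}\langle\xi_i\,|\,T_f(X_i^\ast X_j)\xi_j\rangle\Bigr|\le\|f\|_\infty\sum_{i,j}\langle\xi_i\,|\,T(X_i^\ast X_j)\xi_j\rangle,
\]
so that for $f\ge 0$ one has $T_f\le\|f\|_\infty\,T$ and a single application of Arveson's lemma produces $\kappa_\mu(f)$ with $\|\kappa_\mu(f)\|\le\|f\|_\infty$; general $f$ is then handled by decomposition into four positive parts. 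You instead apply Arveson set-by-set to $\mu(\Delta)\le T$ to get $R_\Delta$, assemble $\kappa_\mu$ on simple functions via finite additivity, obtain contractivity from the POVM identity $\sum_i R_{\Delta_i}=1$ (implicitly using that a unital positive map out of a commutative $C^\ast$-algebra is contractive), and pass to $L^\infty$ by uniform density of simple functions. Both arguments are standard; the paper's is marginally shorter because it avoids the extension step, while yours makes the role of the measure structure more transparent and yields the absolute-continuity statement $\nu(\Delta)=0\Rightarrow R_\Delta=0$ as a byproduct rather than via the separate remark the paper inserts. Your treatment of weak continuity is essentially identical to the paper's.
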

\begin{remark}
We can easily check the well-definedness of the right-hand side of Eq. (\ref{tomita0}).
To begin with, $\rho\circ\mu\ll\nu$ holds for all $\rho\in\mathcal{M}_{\ast,+}$.
For every $X\in\mathcal{X}$ and $\rho\in\mathcal{M}_{\ast}$, we define
a finite measure $\langle \rho\circ\mu, X\rangle$ on $(S,\mathcal{F})$ by
$\langle \rho\circ\mu, X\rangle(\Delta)=\rho(\mu(\Delta,X))$ for all $\Delta\in\mathcal{F}$.
For every $\Delta\in\mathcal{F}$, $ X\in\mathcal{X}_+$ and $\rho\in\mathcal{M}_{\ast,+}$,
\begin{equation}
\langle \rho\circ\mu, X\rangle(\Delta)=\rho(\mu(\Delta,X))\leq \Vert X\Vert\cdot(\rho\circ\mu)(\Delta).
\end{equation}
Thus $\langle \rho\circ\mu, X\rangle\ll\nu$
for all $X\in\mathcal{X}_+$ and $\rho\in\mathcal{M}_{\ast,+}$.
Since every element of $\mathcal{X}$ is a combination of four positive elements of $\mathcal{X}$
and $\mathcal{M}_\ast$ is linearly spanned by $\mathcal{M}_{\ast,+}$,
$\langle \rho\circ\mu, X\rangle\ll\nu$
holds for all $X\in\mathcal{X}$ and $\rho\in\mathcal{M}_{\ast}$.
\end{remark}
\begin{proof}[Proof of Lemma \ref{tomita5}]
For every $f\in L^\infty(\nu)$, we define a linear map $T_f$
on $\mathcal{X}$ into $\mathcal{M}$ by
\begin{equation}
\rho(T_f(X))=\int f(s)\;d(\rho(\mu(s,X))),
\hspace{5mm}\rho\in\mathcal{M}_\ast,X\in\mathcal{X}.
\end{equation}
For every $f\in L^\infty(\nu)$, $T_f$ is decomposable
since $T_{f^\prime}$ is completely positive for every $f^\prime\in L^\infty(\nu)_+$
and $T_f$ has the form $T_f=T_{f_1}-T_{f_2}+i(T_{f_3}-T_{f_4})$, where
$f_1,f_2,f_3,f_4\in L^\infty(\nu)_+$ such that $f=f_1-f_2+i(f_3-f_4)$.
For every $X_1,X_2,\cdots,X_n\in\mathcal{X}$ and $\xi_1,\xi_2,\cdots,\xi_n\in\mathcal{H}$,
\begin{align}\label{cp}
\left|\sum_{i,j=1}^n\langle \xi_i| T_f(X_i^\ast X_j)\xi_j\rangle \right|
&=\left| \int f(s)\;
 d\left(\sum_{i,j=1}^n\langle \xi_i| \mu(s,X_i^\ast X_j)\xi_j\rangle\right) \right| \nonumber \\
 &\leq \Vert f\Vert_\infty \left|\sum_{i,j=1}^n \int
 d\langle \xi_i| \mu(s,X_i^\ast X_j)\xi_j\rangle \right|  \nonumber \\
 &= \Vert f\Vert_\infty \sum_{i,j=1}^n\langle \xi_i| T(X_i^\ast X_j)\xi_j\rangle. 
\end{align}
Especially, if $f\geq 0$, then
\begin{equation}
\sum_{i,j=1}^n\langle \xi_i| T_f(X_i^\ast X_j)\xi_j\rangle \leq
\Vert f\Vert_\infty \sum_{i,j=1}^n\langle \xi_i| T(X_i^\ast X_j)\xi_j\rangle,
\end{equation}
i.e., $T_f\leq \Vert f\Vert_\infty \cdot T$. Then there exists
$\kappa_\mu(f)\in\pi_T(\mathcal{X})^c$ such that
\begin{equation}
T_f(X)=V_T^\ast\kappa_\mu(f)\pi_T(X)V_T,\hspace{5mm}X\in\mathcal{X},
\end{equation}
and that $\Vert\kappa_\mu(f)\Vert\leq \Vert f\Vert_\infty$.
Since every $f\in L^\infty(\nu)$ has the form $f=f_1-f_2+i(f_3-f_4)$, $f_1,f_2,f_3,f_4\in L^\infty(\nu)_+$,
$\kappa_\mu(f)$ can be defined by $\kappa_\mu(f)=\kappa_\mu(f_1)-\kappa_\mu(f_2)+
i(\kappa_\mu(f_3)-\kappa_\mu(f_4))$. The estimate
$\Vert\kappa_\mu(f)\Vert\leq \Vert f\Vert_\infty$, for every $f\in L^\infty(\nu)$,
follows from Eq. (\ref{cp}).
Let $f$ be an element of $L^\infty(\nu)_+$ such that $\kappa_\mu(f)=0$.
For a normal faithful state $\varphi$ on $\mathcal{M}$, it holds that
\begin{equation}
0=\varphi(V_T^\ast\kappa_\mu(f)V_T)=\int f(s)\;d(\varphi\circ\mu)(s)=(\varphi\circ\mu)(f).
\end{equation}
Since $\varphi\circ\mu$ is a normal faithful (semifinite) trace on $L^\infty(\nu)$
and $f\in L^\infty(\nu)_+$, $f=0$.

For every $m,n\in\mathbb{N}$, $X_1,X_2,\cdots,X_m$, $Y_1,Y_2,\cdots,Y_n$ $\in\mathcal{X}$,
and $\xi_1,\xi_2,\cdots,\xi_m$, $\eta_1,\eta_2,\cdots,\eta_n$ $\in\mathcal{H}$, it holds that
\begin{equation}
\left\langle \left. \sum_{i=1}^m\pi_T(X_i)V_T\xi_i
\right|\kappa_\mu(f)\left(\sum_{j=1}^n\pi_T(Y_j)V_T\eta_j\right)\right\rangle
=\sum_{i=1}^m\sum_{j=1}^n\int f(s)\;d\langle \xi_i | \mu(s,X_i^\ast Y_j)\eta_j\rangle.
\end{equation}
Since $\sum_{i=1}^m\sum_{j=1}^n\langle \xi_i | \mu(\cdot,X_i^\ast Y_j)\eta_j\rangle\ll \nu$,
the functional $L^\infty(\nu)\ni f\mapsto$ $\langle \sum_{i=1}^m\pi_T(X_i)V_T\xi_i$
$|\kappa_\mu(f)(\sum_{j=1}^n \pi_T(Y_j)V_T\eta_j)\rangle$ $\in\mathbb{C}$ is continuous.
The continuity of $\kappa_\mu$ follows from the boundedness of $\kappa_\mu(f)$
for every $f\in L^\infty(\nu)$
and from the density of $\mathrm{span}(\pi_T(\mathcal{X})V_T\mathcal{H})$ in $\mathcal{K}_T$.
\end{proof}
\begin{lemma}[]\label{tomita6}
Let $(S,\mathcal{F},\mu)$ be a CP-measure space space of $T\in\mathrm{CP}(\mathcal{X},\mathcal{M})$.
The following three conditions are equivalent:\\
$(1)$ $(S,\mathcal{F},\mu)$ is an orthogonal CP-measure space of $T$;\\
$(2)$ the map $L^\infty(\nu)\ni f\mapsto \kappa_\mu(f)\in\pi_T(\mathcal{X})^c$ is a $^\ast$-isomorphism
of $L^\infty(\nu)$ into $\pi_T(\mathcal{X})^c$;\\
$(3)$ the map $L^\infty(\nu)\ni f\mapsto \kappa_\mu(f)\in\pi_T(\mathcal{X})^c$ is a $^\ast$-representation.\\
If the above equivalent conditions are satisfied, then
$\mathfrak{B}=\{\kappa_\mu(f)\in\pi_T(\mathcal{X})^c\;|\;f\in L^\infty(\nu)\}$ is
an abelian von Neumann subalgebra of $\pi_T(\mathcal{X})^c$.
\end{lemma}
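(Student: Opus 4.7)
My plan is to establish the cyclic equivalence via $(3) \Rightarrow (1)$, $(1) \Rightarrow (3)$ and $(2) \Leftrightarrow (3)$, and then deduce the abelian von Neumann subalgebra claim at the end. Three of the four implications are short; the substantive work lies in converting orthogonality into multiplicativity of $\kappa_\mu$. For the easy directions: $(2) \Rightarrow (3)$ is tautological; for $(3) \Rightarrow (2)$, the kernel of the $^*$-homomorphism $\kappa_\mu$ is a $^*$-ideal, and if $f \in L^\infty(\nu)_+$ lies in it, the last assertion of Lemma \ref{tomita5} forces $f = 0$, so $\kappa_\mu$ is injective and thus a $^*$-isomorphism onto its image. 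For $(3) \Rightarrow (1)$, given $\Delta \in \mathcal{F}$, the operator $P_\Delta := \kappa_\mu(\chi_\Delta)$ is a projection in $\pi_T(\mathcal{X})^c$ since $\chi_\Delta$ is a self-adjoint idempotent, and the defining identity of $\kappa_\mu$ gives $\mu(\Delta)(X) = V_T^* P_\Delta \pi_T(X) V_T$ and $\mu(\Delta^c)(X) = V_T^*(1-P_\Delta)\pi_T(X)V_T$, so the implication $(2) \Rightarrow (1)$ of Lemma \ref{StineOrth} yields $\mu(\Delta)\,\bot\,\mu(\Delta^c)$.

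For $(1) \Rightarrow (3)$ I proceed in three steps. First, for each $\Delta \in \mathcal{F}$, the implication $(1) \Rightarrow (2)$ of Lemma \ref{StineOrth} applied to $T = \mu(\Delta) + \mu(\Delta^c)$ produces a projection $P_\Delta \in \pi_T(\mathcal{X})^c$ realizing $\mu(\Delta)$, which the uniqueness clause of Lemma \ref{tomita5} identifies with $\kappa_\mu(\chi_\Delta)$. Second, for $\Delta_1, \Delta_2 \in \mathcal{F}$, positivity of $\kappa_\mu$ and $\chi_{\Delta_1 \cap \Delta_2} \leq \chi_{\Delta_1}$ give $0 \leq \kappa_\mu(\chi_{\Delta_1 \cap \Delta_2}) \leq P_{\Delta_1}$, and so $(1-P_{\Delta_1})\kappa_\mu(\chi_{\Delta_1 \cap \Delta_2}) = 0$; symmetrically $P_{\Delta_1}\kappa_\mu(\chi_{\Delta_1^c \cap \Delta_2}) = 0$. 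The finite additivity $P_{\Delta_2} = \kappa_\mu(\chi_{\Delta_1 \cap \Delta_2}) + \kappa_\mu(\chi_{\Delta_1^c \cap \Delta_2})$ then gives $P_{\Delta_1} P_{\Delta_2} = \kappa_\mu(\chi_{\Delta_1 \cap \Delta_2})$, and the symmetric argument produces $P_{\Delta_2} P_{\Delta_1} = \kappa_\mu(\chi_{\Delta_1 \cap \Delta_2})$, so the projections $P_\Delta$ pairwise commute and $\kappa_\mu$ is multiplicative on characteristic functions. Third, I extend multiplicativity and $^*$-compatibility from simple functions to all of $L^\infty(\nu)$ using the $\sigma(L^\infty(\nu), L^1(\nu))$-to-weak continuity of $\kappa_\mu$ from Lemma \ref{tomita5}, the norm estimate $\|\kappa_\mu(f)\| \leq \|f\|_\infty$, the $\sigma(L^\infty,L^1)$-continuity of multiplication by a fixed function, and the separate weak continuity of operator multiplication on bounded sets.

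For the closing assertion, commutativity of $\mathfrak{B}$ is inherited from $L^\infty(\nu)$; for $\sigma$-weak closedness I would use that the $^*$-isomorphism $\kappa_\mu$ is isometric, so Banach--Alaoglu together with the weak continuity of $\kappa_\mu$ shows that the image of each closed ball of $L^\infty(\nu)$ is $\sigma$-weakly compact and hence closed in $\pi_T(\mathcal{X})^c$; the Krein--Smulian theorem then forces $\mathfrak{B}$ itself to be $\sigma$-weakly closed, yielding an abelian von Neumann subalgebra. The main obstacle will be the third step of $(1) \Rightarrow (3)$: the passage from characteristic functions to general $f, g \in L^\infty(\nu)$ cannot invoke joint weak continuity of multiplication, so one has to chain two successive separate-continuity arguments through bounded nets of simple approximants, first keeping $f$ simple while $g$ varies and then keeping the resulting general $g$ fixed while $f$ varies.
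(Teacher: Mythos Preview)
Your proof is correct and follows essentially the same route as the paper's: both hinge on showing via Lemma~\ref{StineOrth} that $\kappa_\mu(\chi_\Delta)$ is a projection, deducing multiplicativity on characteristic (hence simple) functions, and then extending to all of $L^\infty(\nu)$. The only notable divergence is your third step of $(1)\Rightarrow(3)$: the paper simply uses that simple functions are \emph{norm}-dense in $L^\infty(\nu)$ together with the contractivity $\|\kappa_\mu(f)\|\le\|f\|_\infty$, which makes the extension immediate and avoids the two-stage separate weak-continuity argument you flag as the main obstacle. Your Krein--Smulian argument for the final clause is a self-contained substitute for the paper's citation of \cite[Proposition~4.1.22]{BR1}, and your $(3)\Rightarrow(2)$ via the positive-kernel clause of Lemma~\ref{tomita5} is a clean alternative to the paper's explicit faithfulness computation.
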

\begin{proof}
$(1)\Rightarrow (2)$ Assume $(1)$. $\kappa_\mu$ is a positive linear map from $L^\infty(\nu)$
into $\pi_T(\mathcal{X})^c$ by the previous lemma. For a projection $f$ of $L^\infty(\nu)$,
there is $\Delta\in\mathcal{F}$ such that $f=\chi_\Delta$. By assumption, it holds that
\begin{equation}
\mu(\Delta)\hspace{2mm}\bot\hspace{2mm}\mu(\Delta^c).
\end{equation}
On the other hand, it holds that
\begin{equation}\label{hom}
\mu(\Delta)+\mu(\Delta^c)=T.
\end{equation}
Hence $\kappa_\mu(f)$ is a projection by Lemma \ref{tomita5}. 
For mutually orthogonal projections $f$ and $g$ of $L^\infty(\nu)$, $f\leq 1-g$.
By Lemma \ref{Arveson1}, it holds that $\kappa_\mu(f)\leq \kappa_\mu(1-g)=1-\kappa_\mu(g)$.
Hence, it holds that $\kappa_\mu(f)\kappa_\mu(g)=0$. Thus, for all projections $f,g$ of $L^\infty(\nu)$,
\begin{equation}
\kappa_\mu(fg)=\kappa_\mu(f)\kappa_\mu(g).
\end{equation}
Since every element of $L^\infty(\nu)$ can be approximated by simple functions on $S$
in norm and the estimate $\Vert\kappa_\mu(f)\Vert\leq \Vert f\Vert_\infty$ holds
for all $f\in L^\infty(\nu)$, the equality (\ref{hom}) extends to all $f,g\in L^\infty(\nu)$.
This implies that $\kappa_\mu$ is a $^\ast$-homomorphism.
Also, for $f\in L^\infty(\nu)$ and $\rho\in\mathcal{M}_{\ast,+}$,
\begin{equation}
\rho((\kappa_\mu(f)V_T)^\ast\kappa_\mu(f)V_T)=\int |f(s)|^2\;d(\rho\circ\mu)(s).
\end{equation}
By this relation, $\kappa_\mu$ is faithful.\\
$(2)\Rightarrow (3)$ A $^\ast$-isomorphism into $\pi_T(\mathcal{X})^c$ is obviously a $^\ast$-representation.\\
$(3)\Rightarrow (1)$ Assume $(3)$. By assumption, for each $\Delta\in\mathcal{F}$,
$\kappa_\mu(\chi_\Delta)$ and $\kappa_\mu(\chi_{\Delta^c})$ are mutually orthogonal projections
satisfying $\kappa_\mu(\chi_\Delta)+\kappa_\mu(\chi_{\Delta^c})=1$. Therefore, it is seen, by
Lemma \ref{tomita5}, that
\begin{equation}
\mu(\Delta)\hspace{2mm}\bot\hspace{2mm}\mu(\Delta^c).
\end{equation}
See \cite[Proposition 4.1.22]{BR1} for the proof of the last part of the lemma.
\end{proof}
For any category $E$, we denote by $\mathrm{Ob}(E)$ and $\mathrm{Arrow}(E)$
the objects and the arrows of $E$, respectively.
We denote by $\mathcal{O}_T$ the category of $\approx$-equivalence classes of
orthogonal CP-measure spaces with barycenter $T\in\mathrm{CP}(\mathcal{X},\mathcal{M})$,
whose arrows are defined by the dominance relation between representatives of equivalence classes,
i.e., for $[(S_1,\mathcal{F}_1,\mu_1)],[(S_2,\mathcal{F}_2,\mu_2)]\in \mathrm{Ob}(\mathcal{O}_T)$,
$[(S_1,\mathcal{F}_1,\mu_1)]\rightarrow[(S_2,\mathcal{F}_2,\mu_2)]$
$\in \mathrm{Arrow}(\mathcal{O}_T)$ if
$(S_1,\mathcal{F}_1,\mu_1)$ $\prec(S_2,\mathcal{F}_2,\mu_2)$, where
$(S_j,\mathcal{F}_j,\mu_j)$ $\in[(S_j,\mathcal{F}_j,\mu_i)]$, $j=1,2$.
Also, we denote by $\mathrm{AbvN}(\pi_T(\mathcal{X})^c)$ the category of abelian von Neumann
subalgebras of $\pi_T(\mathcal{X})^c$,
whose arrows are defined by the inclusion of
von Neumann algebras. The following theorem is the main result in this section:
\begin{theorem}[Tomita theorem for CP maps]
For each $T\in \mathrm{CP}(\mathcal{X},\mathcal{M})$,
$\mathcal{O}_T$ and $\mathrm{AbvN}(\pi_T(\mathcal{X})^c)$ are categorically isomorphic.\vspace{1mm}\\
Let $[(S,\mathcal{F},\mu)]\in \mathrm{Ob}(\mathcal{O}_T)$
and $\mathfrak{B}\in\mathrm{Ob}(\mathrm{AbvN}(\pi_T(\mathcal{X})^c))$ be in the categorical isomorphism,
and $(S,\mathcal{F},\mu)$ be a representative of $[(S,\mathcal{F},\mu)]$.
There exists a $^\ast$-isomorphism $\kappa_\mu:L^\infty(\nu)\rightarrow\mathfrak{B}$ defined by
\begin{equation}
V_T^\ast \kappa_\mu(f)\pi_T(X)V_T=\int f(s)\;d\mu(s,X),\hspace{5mm}f\in L^\infty(\nu),X\in\mathcal{X},
\end{equation}
where $\nu$ is a (scalar-valued) positive finite measure which is equivalent to $\rho\circ\mu$
for some normal faithful state $\rho$ on $\mathcal{M}$.
\end{theorem}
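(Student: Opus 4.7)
My strategy is to exhibit two functors $F : \mathcal{O}_T \to \mathrm{AbvN}(\pi_T(\mathcal{X})^c)$ and $G : \mathrm{AbvN}(\pi_T(\mathcal{X})^c) \to \mathcal{O}_T$ and verify they are mutually inverse. The pointwise bijection between orthogonal CP-measure spaces (up to $\approx$) and abelian von Neumann subalgebras of $\pi_T(\mathcal{X})^c$ is already encoded in Lemmas \ref{tomita5} and \ref{tomita6}; what remains is functoriality, well-definedness on equivalence classes, and the construction of the inverse from purely operator-algebraic data.

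For $F$, given a representative $(S,\mathcal{F},\mu)$ of a class in $\mathrm{Ob}(\mathcal{O}_T)$, set $F([(S,\mathcal{F},\mu)]) := \kappa_\mu(L^\infty(\nu))$. By Lemma \ref{tomita6}, $\kappa_\mu$ is a faithful $^\ast$-isomorphism onto an abelian von Neumann subalgebra of $\pi_T(\mathcal{X})^c$, so the target lies in $\mathrm{Ob}(\mathrm{AbvN}(\pi_T(\mathcal{X})^c))$. If $(S_1,\mathcal{F}_1,\mu_1) \prec (S_2,\mathcal{F}_2,\mu_2)$, then the corner identification $L^\infty(S_1,\rho\circ\mu_1) \cong P L^\infty(S_2,\rho\circ\mu_2) P$ intertwines the two $\kappa$-maps on characteristic functions (both realize the same $V_T^\ast E(\Delta) \pi_T(\cdot) V_T$), so $\kappa_{\mu_1}(L^\infty(\nu_1)) \subset \kappa_{\mu_2}(L^\infty(\nu_2))$. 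Equivalence then gives equality, and $F$ is well-defined on $\approx$-classes and sends arrows to inclusions.

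For $G$, given $\mathfrak{B} \in \mathrm{Ob}(\mathrm{AbvN}(\pi_T(\mathcal{X})^c))$, the spectral theorem produces a localizable measure space $(S_{\mathfrak{B}}, \mathcal{F}_{\mathfrak{B}}, \nu_{\mathfrak{B}})$ and a $^\ast$-isomorphism $\iota : L^\infty(\nu_{\mathfrak{B}}) \to \mathfrak{B}$; one may take $\nu_{\mathfrak{B}}$ to be the scalar spectral measure associated with $\rho \circ V_T^\ast(\cdot)V_T$ for a normal faithful state $\rho$ on $\mathcal{M}$ (which exists since $\mathcal{M}$ is $\sigma$-finite). Define
\begin{equation}
\mu_{\mathfrak{B}}(\Delta, X) := V_T^\ast \, \iota(\chi_\Delta)\, \pi_T(X) V_T,
\qquad \Delta \in \mathcal{F}_{\mathfrak{B}},\; X \in \mathcal{X}.
\end{equation}
Since each $\iota(\chi_\Delta)$ is a projection in $\pi_T(\mathcal{X})^c$, the definition of $\pi_T(\mathcal{X})^c$ as the $\sigma$-weak closure of $\{R : T_R \in \mathrm{CP}(\mathcal{X},\mathcal{M})\}$, together with $\sigma$-weak closedness of $\mathcal{M}$, forces $\mu_{\mathfrak{B}}(\Delta,\cdot) \in \mathrm{CP}(\mathcal{X},\mathcal{M})$. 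Countable additivity is inherited from the $\sigma$-additivity of the spectral resolution in the $\sigma$-weak topology, $\mu_{\mathfrak{B}}(S_{\mathfrak{B}}) = T$ because $\iota(1) = 1$, and orthogonality $\mu_{\mathfrak{B}}(\Delta) \perp \mu_{\mathfrak{B}}(\Delta^c)$ follows from Lemma \ref{StineOrth} applied to the complementary projections $\iota(\chi_\Delta),\,\iota(\chi_{\Delta^c}) \in \pi_T(\mathcal{X})^c$. Set $G(\mathfrak{B}) := [(S_{\mathfrak{B}},\mathcal{F}_{\mathfrak{B}},\mu_{\mathfrak{B}})]$; monotonicity under inclusions follows from the fact that $\mathfrak{B}_1 \subset \mathfrak{B}_2$ is realized by a corner embedding of the $L^\infty$-models via a projection in $L^\infty(\nu_{\mathfrak{B}_2})$.

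To close the loop, $F \circ G(\mathfrak{B}) = \mathfrak{B}$ holds because $\kappa_{\mu_{\mathfrak{B}}}$ and $\iota$ agree on characteristic functions and are both $\sigma$-weakly continuous $^\ast$-maps, so their ranges coincide. Conversely, $G \circ F([(S,\mathcal{F},\mu)]) \approx [(S,\mathcal{F},\mu)]$ because the spectral model of $\kappa_\mu(L^\infty(\nu))$ reproduces $(S,\mathcal{F},\nu)$ up to measure-theoretic isomorphism, and the two $\mathrm{CP}(\mathcal{X},\mathcal{M})$-valued measures agree on projections by the defining formula for $\kappa_\mu$ in Lemma \ref{tomita5}. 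I expect the main obstacle to be bookkeeping rather than analysis: one must check that the ambiguities in the choice of normal faithful $\rho$ (controlling $\nu$) and in the spectral model of $\mathfrak{B}$ collapse exactly to the $\approx$-equivalence, and that the dominance relation lifts consistently under all these identifications. The nontrivial analytic content is entirely absorbed into Lemmas \ref{StineOrth}, \ref{tomita5}, and \ref{tomita6}, plus the classical spectral theorem for abelian von Neumann algebras.
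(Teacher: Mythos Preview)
Your proposal is correct and follows essentially the same route as the paper's proof: both define the forward functor via $\kappa_\mu$ (invoking Lemma~\ref{tomita6}) and the inverse via the spectral realization of an abelian von Neumann subalgebra together with the formula $\mu_{\mathfrak{B}}(\Delta,X)=V_T^\ast\,\iota(\chi_\Delta)\,\pi_T(X)V_T$. You are simply more explicit than the paper about functoriality under $\prec$, well-definedness on $\approx$-classes, and the verification that $F\circ G$ and $G\circ F$ are identities, all of which the paper dispatches with the single phrase ``can be extended into an isomorphism naturally.''
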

\begin{proof}
Let $[(S,\mathcal{F},\mu)]\in \mathrm{Ob}(\mathcal{O}_T)$
and $(S,\mathcal{F},\mu)$ be a representative of $[(S,\mathcal{F},\mu)]$.
By Lemma \ref{tomita6}, there exists a $^\ast$-isomorphism
$\kappa_\mu:L^\infty(\nu)\ni f\mapsto \kappa_\mu(f)\in\pi_T(\mathcal{X})^c$
of $L^\infty(\nu)$ into $\pi_T(\mathcal{X})^c$,
where $\nu$ is a (scalar-valued) positive finite measure which is equivalent to $\rho\circ\mu$
for some normal faithful state $\rho$ on $\mathcal{M}$, and the range
$\mathfrak{B}_{[(S,\mathcal{F},\mu)]}:=\kappa_\mu(L^\infty(\nu))$ is an abelian von Neumann
subalgebra of $\pi_T(\mathcal{X})^c$, i.e.,
$\mathfrak{B}_{[(S,\mathcal{F},\mu)]}\in \mathrm{Ob}(\mathrm{AbvN}(\pi_T(\mathcal{X})^c))$.
It is also seen that $\mathfrak{B}_{[(S,\mathcal{F},\mu)]}$
does not depend on the choice of a representative $(S,\mathcal{F},\mu)$ of $[(S,\mathcal{F},\mu)]$.

Let $\mathfrak{B}\in\mathrm{Ob}(\mathrm{AbvN}(\pi_T(\mathcal{X})^c))$.
By \cite[Chapter III, Theorem 1.18]{T79}, there exist a locally compact Hausdorff space $\Gamma$,
a positive regular Borel measure $\lambda$ on $\Gamma$, and a $^\ast$-isomorphism
$\alpha: L^\infty(\Gamma,\lambda)\rightarrow \mathfrak{B}$. Then we define
$\mathrm{CP}(\mathcal{X},\mathcal{M})$-valued measure $\mu$ by
\begin{equation}
\mu_\mathfrak{B}(\Delta,X)=V_T^\ast\pi_T(X)\alpha(\chi_\Delta)V_T,
\end{equation}
for all $\Delta\in\mathcal{F}$ and $X\in\mathcal{X}$.
By Lemma \ref{tomita6}, $(\Gamma,\mathcal{B}(\Gamma),\mu_\mathfrak{B})$ is
an orthogonal CP-measure space with barycenter $T$.

By the preceeding two paragraphs, it is shown that there exists a one-to-one correspondence
between $\mathrm{Ob}(\mathcal{O}_T)$ and $\mathrm{Ob}(\mathrm{AbvN}(\pi_T(\mathcal{X})^c))$.
This correspondence can be extended into an isomorphism between
$\mathcal{O}_T$ and $\mathrm{AbvN}(\pi_T(\mathcal{X})^c)$ naturally.
\end{proof}
\begin{definition}[]
Let $T_1,T_2\in\mathrm{CP}(\mathcal{X},\mathcal{M})$.\\
$(1)$ $T_1$ and $T_2$ are said to be quasi-equivalent, denoted by $T_1\approx T_2$,
if so are the minimal Stinespring representations $\pi_{T_1}$ and $\pi_{T_2}$.\\
$(2)$ $T_1$ and $T_2$ are said to be disjoint, denoted by $T_1\hspace{0.3em}
\raisebox{-.41ex}{$\circ$}\hspace{-0.53em}\raisebox{0.7ex}{\rotatebox[origin=c]{-90}{--}}\hspace{0.5em}T_{2}$,
if so are $\pi_{T_1}$ and $\pi_{T_2}$.
\end{definition}

\begin{lemma}[]\label{Sdisjoint}
Let $T_1,T_2\in\mathrm{CP}(\mathcal{X},\mathcal{M})$ and $T=T_1+T_2$.
The following conditions are equivalent:\\
$(1)$ $T_1\hspace{0.3em}
\raisebox{-.41ex}{$\circ$}\hspace{-0.53em}
\raisebox{0.7ex}{\rotatebox[origin=c]{-90}{--}}\hspace{0.5em}T_2$G\\
$(2)$ There exists a projection $P\in\pi_T(\mathcal{X})^{\prime\prime}\cap
\pi_T(\mathcal{X})^c$ such that
\begin{equation*}
T_1(X) = V_T^\ast P\pi_T(X)V_T,\hspace{1mm}
T_2(X) = V_T^\ast (1-P)\pi_T(X)V_T,\hspace{3mm}X\in\mathcal{X}.
\end{equation*}
\end{lemma}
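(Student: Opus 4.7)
The plan is to prove both implications by combining Lemma \ref{StineOrth} (which reduces an orthogonal splitting of $T$ to the existence of a suitable projection in $\pi_T(\mathcal{X})^c$) with the standard correspondence between disjointness of subrepresentations and centrality of the splitting projection. The essential observation enabling the linkage is that, by its very definition, $\pi_T(\mathcal{X})^c$ is contained in $\pi_T(\mathcal{X})'$, so the hypothesis $P\in\pi_T(\mathcal{X})''\cap\pi_T(\mathcal{X})^c$ is equivalent to saying that $P$ is a central projection of $\pi_T(\mathcal{X})''$.

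For the direction $(2)\Rightarrow(1)$, apply the implication $(2)\Rightarrow(1)$ of Lemma \ref{StineOrth} to the given $P\in\pi_T(\mathcal{X})^c$; this yields the direct-sum identification $(\pi_T,\mathcal{K}_T,V_T)\cong(\pi_{T_1},\mathcal{K}_{T_1},V_{T_1})\oplus(\pi_{T_2},\mathcal{K}_{T_2},V_{T_2})$, with $\pi_{T_1}$ realized on $P\mathcal{K}_T$ and $\pi_{T_2}$ on $(1-P)\mathcal{K}_T$. Because $P$ belongs to the center $Z(\pi_T(\mathcal{X})'')$, the two summands carry orthogonal central supports, so no nonzero $\pi_{T_1}$-normal state can be $\pi_{T_2}$-normal and vice versa; hence $\pi_{T_1}$ and $\pi_{T_2}$ are disjoint, i.e., $T_1$ and $T_2$ are disjoint.

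For the direction $(1)\Rightarrow(2)$, I would first upgrade disjointness to orthogonality in the sense of Lemma \ref{StineOrth}. Suppose $T'\in\mathrm{CP}(\mathcal{X},\mathcal{M})$ satisfies $T'\leq T_1$ and $T'\leq T_2$. Applying Lemma \ref{Arveson1} to each $T_j$, there are positive contractions $R_j\in\pi_{T_j}(\mathcal{X})'$ with
\begin{equation}
T'(X)=V_{T_j}^{\ast}R_j\pi_{T_j}(X)V_{T_j}=(R_j^{1/2}V_{T_j})^{\ast}\pi_{T_j}(X)(R_j^{1/2}V_{T_j}),\quad X\in\mathcal{X},
\end{equation}
so the triple $(\pi_{T_j},\mathcal{K}_{T_j},R_j^{1/2}V_{T_j})$ is a (generally non-minimal) Stinespring representation of $T'$ for each $j=1,2$. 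Minimalizing it produces $\pi_{T'}$ as a subrepresentation of $\pi_{T_j}$, and therefore $\pi_{T'}$ is simultaneously quasi-contained in $\pi_{T_1}$ and in $\pi_{T_2}$. Disjointness of $\pi_{T_1}$ and $\pi_{T_2}$ then forces $\pi_{T'}=0$, and hence $T'=0$. This gives $T_1\;\bot\;T_2$, so by Lemma \ref{StineOrth} there exists a projection $P\in\pi_T(\mathcal{X})^c$ implementing the decomposition together with the unitary identification $\pi_T\cong\pi_{T_1}\oplus\pi_{T_2}$, with $P$ being the projection onto the $\pi_{T_1}$-summand.

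It remains to promote $P$ into $\pi_T(\mathcal{X})''$. Because $\pi_{T_1}$ and $\pi_{T_2}$ are disjoint, no nonzero element of $\pi_T(\mathcal{X})'$ can intertwine vectors in $P\mathcal{K}_T$ with vectors in $(1-P)\mathcal{K}_T$; equivalently, every element of $\pi_T(\mathcal{X})'$ commutes with $P$. Thus $P\in\pi_T(\mathcal{X})''$ by the double commutant theorem, yielding the desired membership $P\in\pi_T(\mathcal{X})''\cap\pi_T(\mathcal{X})^c$. The step I expect to demand the most care is the quasi-containment argument, where one must verify that the non-minimal Stinespring triple built from $R_j^{1/2}V_{T_j}$ really compresses to a representation quasi-equivalent to (a subrepresentation of) $\pi_{T_j}$, and that the resulting double quasi-containment of the nonzero $\pi_{T'}$ indeed contradicts disjointness.
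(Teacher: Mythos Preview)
Your proof is correct and follows the same strategy as the paper: both directions pass through Lemma \ref{StineOrth}, with disjointness upgraded to orthogonality via Lemma \ref{Arveson1} and the common-lower-bound criterion, after which the projection $P\in\pi_T(\mathcal{X})^c$ is shown to lie in $\pi_T(\mathcal{X})''$ using disjointness of the two summands. The only cosmetic difference is in this last step: the paper verifies $(1-P)BP=0$ for all $B\in\pi_T(\mathcal{X})'$ by constructing an auxiliary CP map $T''(X)=((1-P)BW)^\ast\pi_T(X)((1-P)BW)$ and arguing that $T''$ is dominated by subrepresentations of both $\pi_{T_1}$ and $\pi_{T_2}$, whereas you invoke the standard intertwiner characterization of disjointness directly---the content is the same.
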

\begin{proof} $\quad$\\
$(1)\Rightarrow (2)$ Assume $(1)$.
Let $T^\prime\in\mathrm{CP}(\mathcal{X},\mathcal{M})$ such that $T^\prime\leq T_1$ and
$T^\prime\leq T_2$. 
By Lemma \ref{Arveson1}, $T^\prime$ has the form
\begin{equation}
T^\prime(X)=V_{T_1}^\ast Q_1\pi_{T_1}(X)V_{T_1}=V_{T_2}^\ast Q_2\pi_{T_2}(X)V_{T_2},
\end{equation}
for all $X\in\mathcal{X}$, where $Q_j\in\pi_{T_j}(\mathcal{X})^c$, $j=1,2$.
Thus the minimal Stinespring representation of $T^\prime$ is unitarily equivalent to
subrepresentations of $\pi_{T_1}$ and of $\pi_{T_2}$.
By assumption, which is nothing but the disjointness of $\pi_{T_1}$ and $\pi_{T_2}$,
we have $T^\prime=0$. By Lemma \ref{StineOrth}, $T_1\;\bot\; T_2$.
Therefore, by Lemma \ref{Arveson1}, there exists a projection $P\in\pi_T(\mathcal{X})^\prime$ such that
\begin{equation}
T_1(X)=V_T^\ast P\pi_T(X)V_T,\hspace{5mm}T_2(X)=V_T^\ast (1-P)\pi_T(X)V_T,
\end{equation}
for all $X\in\mathcal{X}$. In particular, $P\in\pi_T(\mathcal{X})^c$.
For any $B\in\pi_T(\mathcal{X})^\prime$ and $W\in\textrm{\boldmath $B$}(\mathcal{H},P\mathcal{K}_T)$,
we define $T^{\prime\prime}\in\mathrm{CP}(\mathcal{X},\textrm{\boldmath $B$}(\mathcal{H}))$ by
\begin{equation}
T^{\prime\prime}(X)=((1-P)BW)^\ast\pi_T(X)((1-P)BW),
\end{equation}
for all $X\in\mathcal{X}$. The minimal Stinespring representation $\pi_{T^{\prime\prime}}$
of $T^{\prime\prime}$ is seen to be unitarily equivalent to
a subrepresentation of $\pi_{T_2}$.
For every $n\in\mathbb{N}$, $X_1,X_2,\cdots,X_n\in\mathcal{X}$ and
$Y_1,Y_2,\cdots,Y_n$ $\in\textrm{\boldmath $B$}(\mathcal{H})$,
\begin{align}
\sum_{i,j=1}^n Y_i^\ast T^{\prime\prime}(X_i^\ast X_j)Y_j
 &=\left(\sum_{i=1}^n \pi_T(X_i)WY_i\right)^\ast |(1-P)B|^2
\left(\sum_{j=1}^n \pi_T(X_j)WY_j\right) \nonumber\\
 &\leq \Vert(1-P)B\Vert^2\cdot \sum_{i,j=1}^n Y_i^\ast W^\ast\pi_T(X_i^\ast X_j)WY_j
\end{align}
i.e., $T^{\prime\prime}\leq \Vert(1-P)B\Vert^2\cdot W^\ast\pi_T(\cdot)W$.
Thus the minimal Stinespring representation of the CP map $W^\ast\pi_T(\cdot)W$
is unitarily equivalent to a subrepresentation of $\pi_{T_1}$,
so is $T^{\prime\prime}$. Hence $T^{\prime\prime}=0$,
equivalently, $0=\Vert T^{\prime\prime}\Vert=\Vert T^{\prime\prime}(1)\Vert=\Vert (1-P)BW\Vert^2$.
Since $B$ and $W$ are arbitrary, it holds that
$(1-P)BW=0$ for every $B\in\pi_{T}(\mathcal{X})^\prime$ and
$W\in\textrm{\boldmath $B$}(\mathcal{H},P\mathcal{K}_T)$.
This implies that $(1-P)BP=0$ for every $B\in\pi_{T}(\mathcal{X})^\prime$.
Thus $BP=PBP=PB$ for every $B\in\pi_{T}(\mathcal{X})^\prime$, that is to say,
$P\in\pi_{T}(\mathcal{X})^{\prime\prime}$.\\
$(2)\Rightarrow (1)$ Assume $(2)$. Then there exists no partial isometry $U\in\pi_T(\mathcal{X})^\prime$
such that $(1-P)UP=U$. Thus no subrepresentation of $\pi_{T_1}$ is unitarily equivalent to
that of $\pi_{T_2}$.
\end{proof}
\begin{definition}[]
$(S,\mathcal{F},\mu)$ is called a subcentral CP-measure space with barycenter
$T\in\mathrm{CP}(\mathcal{X},\mathcal{M})$ if its corresponding
von Neumann algebra is an abelian von Neumann subalgebra of
$\pi_T(\mathcal{X})^{\prime\prime}\cap\pi_T(\mathcal{X})^c$.
\end{definition}
\begin{theorem}[]
Let $(S,\mathcal{F},\mu)$ be a CP-measure space with  barycenter
$T\in\mathrm{CP}(\mathcal{X},\mathcal{M})$. The following conditions are equivalent:\\
$(1)$ For all $\Delta\in\mathcal{F}$,
\begin{equation}
\mu(\Delta)\;\hspace{0.3em}
\raisebox{-.41ex}{$\circ$}\hspace{-0.53em}\raisebox{0.7ex}{\rotatebox[origin=c]{-90}{--}}\hspace{0.5em}
\;\mu(\Delta^c);
\end{equation}
$(2)$ $(S,\mathcal{F},\mu)$ be a subcentral CP-measure space with  barycenter $T$.
\end{theorem}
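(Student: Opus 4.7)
The plan is to use the Tomita theorem for CP maps to translate the measure-theoretic condition into a statement about the associated abelian von Neumann subalgebra, and then to invoke Lemma \ref{Sdisjoint} to identify precisely when the generating projections lie in $\pi_T(\mathcal{X})^{\prime\prime}\cap\pi_T(\mathcal{X})^c$.

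For the implication $(1)\Rightarrow (2)$, I would first observe that for any $\Delta\in\mathcal{F}$ disjointness of $\mu(\Delta)$ and $\mu(\Delta^c)$ implies their orthogonality, since the projection $P$ furnished by Lemma \ref{Sdisjoint} lies a fortiori in $\pi_T(\mathcal{X})^c$ and realises the orthogonal decomposition of Lemma \ref{StineOrth}. Hence $(S,\mathcal{F},\mu)$ is an orthogonal CP-measure space, and by the Tomita theorem for CP maps there exists a $^\ast$-isomorphism $\kappa_\mu:L^\infty(\nu)\to\mathfrak{B}$ onto an abelian von Neumann subalgebra $\mathfrak{B}\subseteq\pi_T(\mathcal{X})^c$. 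Taking $f=\chi_\Delta$ in the defining relation of $\kappa_\mu$ shows that the projection $P=\kappa_\mu(\chi_\Delta)$ is precisely the projection appearing in the orthogonal decomposition of $T=\mu(\Delta)+\mu(\Delta^c)$. By assumption $(1)$ and Lemma \ref{Sdisjoint}, this projection actually lies in $\pi_T(\mathcal{X})^{\prime\prime}\cap\pi_T(\mathcal{X})^c$. Since the characteristic functions $\{\chi_\Delta\}_{\Delta\in\mathcal{F}}$ generate $L^\infty(\nu)$ as a von Neumann algebra (they are $\sigma(L^\infty(\nu),L^1(\nu))$-dense in its unit ball) and $\kappa_\mu$ is $\sigma$-weakly continuous (Lemma \ref{tomita5}), it follows that $\mathfrak{B}\subseteq\pi_T(\mathcal{X})^{\prime\prime}\cap\pi_T(\mathcal{X})^c$, which is exactly the definition of $(S,\mathcal{F},\mu)$ being subcentral.

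For the converse $(2)\Rightarrow (1)$, assume that $\mathfrak{B}=\kappa_\mu(L^\infty(\nu))$ is contained in $\pi_T(\mathcal{X})^{\prime\prime}\cap\pi_T(\mathcal{X})^c$. In particular $\mathfrak{B}\subseteq\pi_T(\mathcal{X})^c$, so by Lemma \ref{tomita6} the CP-measure space is orthogonal, and for every $\Delta\in\mathcal{F}$ the projection $\kappa_\mu(\chi_\Delta)$ provides the orthogonal decomposition
\begin{equation*}
\mu(\Delta)(X)=V_T^\ast\kappa_\mu(\chi_\Delta)\pi_T(X)V_T,\qquad \mu(\Delta^c)(X)=V_T^\ast(1-\kappa_\mu(\chi_\Delta))\pi_T(X)V_T.
\end{equation*}
Because $\kappa_\mu(\chi_\Delta)\in\mathfrak{B}\subseteq\pi_T(\mathcal{X})^{\prime\prime}\cap\pi_T(\mathcal{X})^c$ by hypothesis, condition $(2)$ of Lemma \ref{Sdisjoint} is satisfied, so $\mu(\Delta)$ and $\mu(\Delta^c)$ are disjoint.

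The main obstacle is the bookkeeping in the forward direction: I must confirm that the single projection realising orthogonality of $\mu(\Delta),\mu(\Delta^c)$ in Lemma \ref{StineOrth} coincides with $\kappa_\mu(\chi_\Delta)$ (so that Lemma \ref{Sdisjoint} can be applied to the same projection), and that $\mathfrak{B}$ is generated, in the $\sigma$-weak topology, by its projections $\kappa_\mu(\chi_\Delta)$. Both points follow from the uniqueness in Lemma \ref{tomita5} and the $\sigma$-weak continuity of $\kappa_\mu$, but they deserve explicit mention.
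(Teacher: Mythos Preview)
Your proposal is correct and follows essentially the same route as the paper: both directions hinge on Lemma~\ref{Sdisjoint} together with the correspondence $\kappa_\mu$ of Lemma~\ref{tomita5}/\ref{tomita6}, and the identification $\mu(\Delta,X)=V_T^\ast\kappa_\mu(\chi_\Delta)\pi_T(X)V_T$. Your write-up is in fact a bit more explicit than the paper's on two points the paper leaves implicit---that the projection produced by Lemma~\ref{Sdisjoint} coincides with $\kappa_\mu(\chi_\Delta)$ (this follows from the affine order isomorphism of Theorem~\ref{Arveson2}, not merely from Lemma~\ref{tomita5}), and that one passes from projections to all of $\mathfrak{B}$ via $\sigma$-weak density and continuity of $\kappa_\mu$.
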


\begin{proof}
 $\quad$\\
$(1)\Rightarrow (2)$ Let $\rho$ be a normal faithful state on $\mathcal{M}$ and
$\nu$ be a (scalar-valued) positive
finite measure on $(S,\mathcal{F})$ which is equivalent to $\rho\circ\mu$.
Assume $(1)$. By Lemma \ref{Sdisjoint}, $(S,\mathcal{F},\mu)$ is an orthogonal
CP-measure space of $T$ and the image $\kappa_\mu(f)$ of projections $f$ of $L^\infty(\nu)$ by $\kappa_\mu$
is projections of $\pi_T(\mathcal{X})^{\prime\prime}\cap\pi_T(\mathcal{X})^c$.
Since $\kappa_\mu$ is a $^\ast$-representation of $L^\infty(\nu)$ into $\pi_T(\mathcal{X})^c$,
$\kappa_\mu(L^\infty(\nu))$ is an abelian von Neumann subalgebra of
$\pi_T(\mathcal{X})^{\prime\prime}\cap\pi_T(\mathcal{X})^c$.\\
$(2)\Rightarrow (1)$ Assume $(2)$. For any $\Delta\in\mathcal{F}$, $\kappa_\mu(\chi_\Delta)$ and
$\kappa_\mu(\chi_{\Delta^c})$ are mutually orthogonal projections contained in
$\pi_T(\mathcal{X})^{\prime\prime}\cap\pi_T(\mathcal{X})^c$ such that
$\kappa_\mu(\chi_\Delta)+\kappa_\mu(\chi_{\Delta^c})=1$. By Lemma \ref{Sdisjoint} and
the formula $\mu(\Delta,X)=V_T^\ast \kappa_\mu(\chi_\Delta)\pi_T(X)V_T$, for all $X\in\mathcal{X}$ and
$\Delta\in\mathcal{F}$, it holds that $\mu(\Delta)\hspace{0.3em}
\raisebox{-.41ex}{$\circ$}\hspace{-0.53em}\raisebox{0.7ex}{\rotatebox[origin=c]{-90}{--}}\hspace{0.5em}
\mu(\Delta^c)$.
\end{proof}

\section{Local States and Generalized Local Sectors}\label{se:6}
DHR theory is a sector theory whose reference state is a vacuum $\omega_0$.
However, physically important reference states are not only vacuum states,
e.g., KMS states $\omega_\beta$, $\beta>0$ and relativistic KMS states \cite{Oj86,BB,Oj03,Oj04,Oj05}.
Thus we attempt to define the concept of sector for local states
to understand emergence processes in various representations.
The discussion here is, of course, applicable to DHR-DR theory.

In general, reference states are assumed to be translation-invariant.
We believe that this assumption is
very natural\footnote{On the other hand, the assumption of
Lorentz invariance for reference states is not always natural
because it is known that the invariance is broken in KMS states \cite{Oj86}.}.
Hence their GNS representations unitarily implement
the action $\alpha|_{\mathbb{R}^4}$ of the translation group $\mathbb{R}^4$.
We assume that reference representations $\pi$ which appear here are given by the GNS representations
$\pi_\varphi$ of translation-invariant states $\varphi\in E_\mathcal{A}$.
Thus $\pi$ is separable and unitarily implements the action $\alpha|_{\mathbb{R}^4}$
of $\mathbb{R}^4$.

Let $T$ be an element of $E_{\mathcal{A},\pi}^L(\Lambda)$
for some $\Lambda\in \mathcal{K}_\Subset^{DC}$ and $\theta_T$ be the commutant lifting of
$(V_T^\ast \pi_T(\mathcal{A})^{\prime\prime}V_T)^\prime$ into
$\{V_TV_T^\ast\}^\prime \cap \pi_T(\mathcal{A})^\prime$ (see Theorem \ref{CommLift}).
We are ready to define the concept of local sector for local states:
\begin{definition}[]
An element $T$ of $E_{\mathcal{A},\pi}^L(\Lambda)$ is called a local factor state
if a unique von Neumann algebra $\mathcal{Z}_{T,\mathrm{full}}:=
\pi_T(\mathcal{A})^{\prime\prime}\cap\pi_T(\mathcal{A})^c$ is equal to
$\theta_T(\mathfrak{Z}_\pi(\mathcal{A}))$, where $\mathfrak{Z}_\pi(\mathcal{A})$ is the
center $\pi(\mathcal{A})^{\prime}\cap\pi(\mathcal{A})^{\prime\prime}$
of the von Neumann algebra $\pi(\mathcal{A})^{\prime\prime}$.
We denote by $F_{\mathcal{A},\pi}^L(\Lambda)$ the set of local factor states on $\mathcal{A}$
with respect to $\pi$ with region $\Lambda$.
A local sector of $\mathcal{A}$ with respect to $\pi$ in $\Lambda$ is a quasi-equivalence class of
a local factor state $T\in F_{\mathcal{A},\pi}^L(\Lambda)$.
\end{definition}
\begin{remark}
The definition of local sector is a natural generalization of sectors for states to local states.
A sector of a C$^{\ast}$-algebra $\mathcal{X}$ is defined by a quasi-equivalent class of factor states
\cite{Oj03,Oj04,Oj05}.
A state $\omega$ on a C$^{\ast}$-algebra $\mathcal{X}$ is called a factor state if the center
$\mathfrak{Z}_{\omega}(\mathcal{X})$ of $\pi_{\omega}(\mathcal{X})^{\prime\prime}$ is trivial.
\end{remark}
The von Neumann algebra $\theta_T(\mathfrak{Z}_\pi(\mathcal{A}))\cong \mathfrak{Z}_\pi(\mathcal{A})$
describes bounded (measurable) functions on
order parameters of the system already emerged in a given reference representation $\pi$
and is common for all local states.
This is the reason why we should treat $\theta_T(\mathfrak{Z}_\pi(\mathcal{A}))$ as a coefficient algebra
in the definition of local sector for local states.

We can apply the method discussed in the previous section to local states.
In the context of local states, we are interested in sectors involved in a given local state $T$.
In the light of central decompositions of states,
we desire a subcentral CP-measure space of $T$ corresponding to
the abelian von Neumann algebra $\mathcal{Z}_T$ determined by the following two equations:
\begin{align}
\mathcal{Z}_T\vee \theta_T(\mathfrak{Z}_\pi(\mathcal{A})) &=\mathcal{Z}_{T,\mathrm{full}}, \\
\mathcal{Z}_T\cap \theta_T(\mathfrak{Z}_\pi(\mathcal{A}))&=\mathbb{C}1.
\end{align}
In contrast to $\theta_T(\mathfrak{Z}_\pi(\mathcal{A}))\cong \mathfrak{Z}_\pi(\mathcal{A})$,
the von Neumann algebra $\mathcal{Z}_T$ satisfying the above equations
describes bounded (measurable) functions on order parameters
of the system newly emerged in the given local state $T$.
In addition, the integral decomposition theory for separable representations of separable C$^\ast$-algebras
\cite{BR1,T79} can be applied to $(\pi_T,\mathcal{K}_T)$ 
if $\mathcal{A}$ is separable:
\begin{equation}
(\pi_T,\mathcal{K}_T)=\int^{\oplus}_Z (\pi_{T,z},\mathcal{K}_{T,z})\; d\nu_{\mathcal{Z}_T}(z),
\end{equation}
where $Z$ is a standard Borel space and $\nu_{\mathcal{Z}_T}$ is a Borel measure on $Z$.
We would like to emphasize that
a family $\{(\pi_{T,z},\mathcal{K}_{T,z})\}_{z\in Z}$ of representations
corresponds to that of mutually disjoint
local factor states\footnote{If a ``subcentral" version of \cite[Theorem 4.2]{Fujimoto} holds,
we can find candidates of local sectors involved in a given local state more precisely.}.

Local sector analysis of local states with respect to
translation-invariant representations other than vacuum ones will be discussed in \cite{O14no2}.

\section*{Acknowledgment}
One (K.O.) of the authors would like to thank Professor M. Ozawa for his encouragement and useful comments.
K.O. is supported by the John Templeton Foundations, No. 35771
and by the JSPS KAKENHI, No.~26247016.
I.O. is supported by the JSPS KAKENHI, No. 24320008.
H.S. is supported by the JSPS KAKENHI, No. 26870696.

\end{document}